\documentclass[12pt]{article}
\usepackage[normalem]{ulem}
\usepackage{amssymb,amsmath,amsthm,latexsym,amsfonts,amscd,dsfont,enumerate}
\usepackage{graphicx}
\usepackage[english]{babel}
\usepackage{booktabs}


\numberwithin{equation}{section}

\newcommand{\re}{\mathrm{e}}
\newcommand{\rd}{\mathrm{d}}
\newcommand{\payoff}{\mathrm{F}}
\newcommand{\indicator}{1\!\!1}
\newcommand{\TPM}[2]{\Pi_{#1,#2}}
\newcommand{\MP}[1]{\mathrm{P}_{#1}}
\newcommand{\RMQA}{RMQA~}
\newcommand{\LTSA}{LTSA~}
\newcommand{\Phiz}{\Phi}
\newcommand{\phiz}{\varphi}

\newenvironment{definition}[1][Definition]{\begin{trivlist}
\item[\hskip \labelsep {\bfseries #1}]}{\end{trivlist}}

\newcommand{\ra}[1]{\renewcommand{\arraystretch}{#1}}

\newtheorem{assumption}{Assumption}
\newtheorem{rem}{Remark}
\newtheorem{proposition}{Proposition}

\newtheorem{ex}{Example}
\newtheorem{lem}{Lemma}

\linespread{1}
\setlength{\textwidth}{17.0cm}
\setlength{\textheight}{21 cm}
\setlength{\topmargin}{0.0cm}
\setlength{\oddsidemargin}{-0.5 cm}
\setlength{\evensidemargin}{0.0cm}
\setlength{\headsep}{1.0 cm}

\begin{document}

\title{A backward Monte Carlo approach\\ to exotic option pricing}
\author{Giacomo Bormetti$^{\textrm{a}}$, Giorgia Callegaro$^{b}$, Giulia Livieri$^{\textrm{c,}}$\footnote{Corresponding author. \textit{E-mail address}: giulia.livieri@sns.it}~,\\ and Andrea Pallavicini$^{\textrm{d,e}}$}
\date{
 \today
}

\maketitle

\small
\begin{center}
  $^\textrm{a}$~\emph{Department of Mathematics, 	University of Bologna, Piazza di Porta San Donato 5, 40126 Bologna, Italy}\\
    $^\textrm{b}$~\emph{Department of Mathematics, University of Padova, via Trieste 63, 35121 Padova, Italy}\\
  $^\textrm{c}$~\emph{Scuola Normale Superiore, Piazza dei Cavalieri 7, 56126 Pisa, Italy}\\
  $^\textrm{d}$~\emph{Department of Mathematics, Imperial College, London SW7 2AZ, United Kingdom}\\
  $^\textrm{e}$~\emph{Banca IMI, Largo Mattioli 3, 20121 Milano, Italy}
\end{center}
\normalsize

\smallskip

\begin{abstract}

  \noindent We propose a novel algorithm which allows to sample paths from an underlying price process in a local volatility model and to achieve a substantial variance reduction when pricing exotic options. The new algorithm relies on the construction of a discrete multinomial tree.  The crucial feature of our approach is that -- in a similar spirit to the Brownian Bridge -- each random path runs backward from a terminal fixed point to the initial spot price. We characterize the tree in two alternative ways: in terms of the optimal grids originating from the Recursive Marginal Quantization algorithm and following an approach inspired by the finite difference approximation of the diffusion's infinitesimal generator. We assess the reliability of the new methodology comparing the performance of both approaches and benchmarking them with competitor Monte Carlo methods.  
 \end{abstract}
\textbf{JEL codes}: C63, G12, G13  \\ 
\textbf{Keywords}: Monte Carlo, Variance Reduction, Quantization, Markov Generator, Local Volatility, Option Pricing

\newpage
{\small \tableofcontents}
\vfill
\newpage

\section{Introduction}
Pricing financial derivatives typically requires to solve two main issues. In the first place, the choice of a flexible model for the stochastic evolution of the underlying asset price. At this point, a common trade-off arises as models which describe the historical dynamics of the asset price with adequate realism are usually unable to precisely match volatility smiles observed in the option market~\cite{bouchaud2003theory, gatheral2011volatility}. Secondly, once a reasonable candidate has been identified, there is the need to develop fast, accurate, and possibly flexible numerical methods ~\cite{hull2006options,wilmott1993option,clewlow1996implementing}. As regards the former point, Local Volatility (LV) models have become very popular since their introduction by Dupire~\cite{dupire1994pricing}, and Derman and co-authors~\cite{derman1996riding}. Even though the legitimate use of LV models for the description of the asset dynamics is highly questionable, the ability to self-consistently reproduce volatility smiles implied by the market motivates their widespread diffusion among practitioners. Since calibration \emph{\`a la Dupire}~\cite{dupire1994pricing} of LV models assumes the unrealistic availability of a continuum of vanilla option prices across different strikes and maturities~\cite{kahale2004arbitrage}, recent years have seen the emergence of a growing strand of literature dealing with this problem (see for instance \cite{kahale2004arbitrage, coleman1999reconstructing,andreasen2011volatility,lipton2011filling,reghai2012local,pallavicini2015}). In the present paper we fix the calibration following the latest achievements, and we solely focus on the latter issue. Specifically, our goal is to design a novel pricing algorithm based on the Monte Carlo approach able to achieve a sizeable variance reduction with respect to competitor approaches.

\medskip

The main result of this paper is the development of a flexible and efficient pricing algorithm -- termed the \emph{backward Monte Carlo algorithm} -- which runs backward on top of a multinomial tree. The flexibility of this algorithm permits to price generic payoffs without the need of designing tailor-made solutions for each payoff specification. This feature is inherited directly from the Monte Carlo approach (see~\cite{glasserman2003monte} for an almost exhaustive survey of Monte Carlo methods in finance). The efficiency, instead, is linked primarily to the backward movement on the multinomial tree. Indeed, our approach combines both advantages of stratified sampling Monte Carlo and the Brownian Bridge construction~\cite{glasserman2003monte,lapeyre2003understanding,bormetti2006pricing}, extending them to more general financial-asset dynamics than the simplistic assumptions of Black, Scholes, and Merton~\cite{black1973pricing,merton1973theory}. The second purpose of this paper -- minor in relative terms with respect to the first one -- is to investigate an alternative scheme for the implementation of the Recursive Marginal Quantization Algorithm (henceforth \RMQA). The \RMQA is a recursive algorithm which allows to approximate a continuous time diffusion by means of a discrete-time Markov Chain defined on a finite grid of points. The alternative scheme, employed at each step of the \RMQA, is based on the Lloyd I method~\cite{kieffer1982exponential} in combination with the Anderson acceleration Algorithm~\cite{anderson1965iterative,walker2011anderson} developed to solve fixed-point problems. The accelerated scheme permits to speed up the linear rate of convergence of the Lloyd I method~\cite{kieffer1982exponential}, and besides, to fix some flaws of previous \RMQA implementations highlighted in~\cite{callegaro2015pricing}. 

In more detail, a discrete-time Markov Chain approximation of the asset price dynamics can be achieved by introducing at each time step two quantities: (i) a grid for the possible values that the can take, and (ii) the transition probabilities to propagate from one state to another state. Among the approaches discussed in the literature for computing these quantities, in the present paper we analyse and extend two of them. The first approach quantizes via the \RMQA the Euler-Maruyama approximation of the Stochastic Differential Equation (SDE) modelling the underlying asset price. The \RMQA has been introduced in~\cite{sagna2013recursive} to compute vanilla call and put options prices in a pseudo Constant Elasticity of Variance (CEV) LV model. In~\cite{callegaro2015pricing} authors employ it to calibrate a Quadratic Normal LV model. The alternative approach, instead, discretises in an appropriate way the infinitesimal Markov generator of the underlying diffusion by means of a finite difference scheme (see~\cite{kushner2013numerical,albanese2007convergence} for a detailed discussion of theoretical convergence results). We name the latter approach Large Time Step Algorithm, henceforth \LTSA. In~\cite{reghai2012local} authors implement a modified version of the \LTSA to price discrete look-back options in a CEV model, whereas in~\cite{albanese2009spectral} they employ the \LTSA idea to price a particular class of path-dependent payoffs termed Abelian payoffs. More specifically, they incorporate the path-dependency feature -- in the specific case whether or not the underlying asset price hits a specified level over the life of the option -- within the Markov generator. The joint transition probability matrix is then recovered as the solution of a payoff specific matrix equation. The \RMQA and \LTSA present two major differences which can be summarized as follows: (i) the \RMQA permits to recover the optimal -- according to a specific criterion~\cite{printems2005functional} -- multinomial grid, whereas the \LTSA works on a a priori user-specified grid, (ii) the \LTSA necessitates less computational burden than the \RMQA when pricing financial derivatives products whose payoff requires the observation of the underlying on a predefined finite set of dates. Unfortunately, this result holds only for a piecewise time-homogeneous Local Volatility dynamics.

The usage in both equity and foreign exchange (FX) markets of LV models is largely motivated by the flexibility of the approach which allows the exact calibration to the whole volatility surface. Moreover, the accurate re-pricing of plain vanilla instruments and of most liquid European options, together with the stable computation of the option sensitivity to model parameters and the availability of specific calibration procedures, make the LV modelling approach a popular choice. The LV models are also employed in practice to evaluate Asian options and other path-dependent options, although more sophisticated Stochastic Local Volatility (SLV) models are usually adopted. We refer to ~\cite{ren2007calibrating} for details. The price of path-dependent derivative products is then computed either solving numerically a Partial Differential Equation (PDE) or via Monte Carlo methods. The PDE approach is computationally efficient but it requires the definition of a payoff specific pricing equation (see~\cite{wilmott1993option} for an extensive survey on PDE approached in a financial context). Moreover, some options with exotic payoffs and exercise rules are tricky to price even within the Black, Scholes, and Merton framework. On the other hand, standard Monte Carlo method suffers from some inefficiency -- especially when pricing out-of-the-money (OTM) options -- since a relevant number of sampled paths does not contribute to the option payoff. However, the Monte Carlo approach is extremely flexible and several numerical techniques have been introduced to reduce the variance of the Monte Carlo estimator~\cite{clewlow1996implementing,glasserman2003monte}. The backward Monte Carlo algorithm pursues this task.

In this paper we consider the FX market, where we can trade spot and forward contracts along with vanilla and exotic options. In particular, we model the EUR/USD rate using a LV dynamics. The calibration procedure is the one employed in~\cite{reghai2012local} for the equity market and in~\cite{pallavicini2015} for the FX market. Specifically, we calibrate the stochastic dynamics for the EUR/USD rate in order to reproduce the observed implied volatilities with a one basis point tolerance while the extrapolation to implied volatilities for maturities not quoted by the market is achieved by means of a piecewise time-homogeneous LV model. In order to show the competitive performances of the backward Monte Carlo algorithm we compute the price of different kinds of options. We do not price basket options, but we only focus on derivatives written on a single underlying asset considering Asian calls, up-out barrier calls, and auto-callable options. We show that these instruments can be priced more effectively by simulating the discrete-time Markov Chain approximation of the diffusive dynamics from the maturity back to the initial date. In these cases, indeed, the backward Monte Carlo algorithm leads to a significant reduction of the Monte Carlo variance. We leave to a future work the extension of our analysis to SLV models.

The rest of the paper is organized as follows. In Section~\ref{sec:Monte_Carlo} we introduce the key ideas of the backward Monte Carlo algorithm on a multinomial tree. Section~\ref{sec:transition} presents the alternative schemes of implementation based on the \RMQA and \LTSA, and details the numerical investigations testing the performance of both approaches. Section \ref{sec:Numerical} presents a piecewise time-homogeneous LV model for the FX market and reports the pricing performances of the backward Monte Carlo algorithm, benchmarking them with different Monte Carlo algorithms. Finally, in Section \ref{sec:conclusion} we conclude and draw possible perspectives.

\medskip

\noindent \textbf{Notation}: we use the symbol ``$\doteq$'' for the definition of new quantities.

\section{The backward Monte Carlo algorithm}\label{sec:Monte_Carlo}

First of all, let us introduce our working framework. We consider a probability space $(\Omega, \mathcal{F}, \mathbb{P})$, a given time horizon $T >0$ and a stochastic process $X={(X_{t})}_{t \in [0,T]}$ describing the evolution of the asset price. We suppose that the market is complete, so that under the unique risk-neutral probability measure $\mathbb{Q}$, $X$ has a Markovian dynamics described by the following SDE
\begin{equation}\label{eq:dynamic}
  \begin{cases}
    \rd X_{t}=b(t,X_{t})\,\rd t+\sigma(t,X_{t})\,\rd W_{t}\,,\\
    X_{0}=x_{0}\in \mathbb{R}_{+}\,,
  \end{cases}
\end{equation}
where $(W_{t})_{t\in [0,T]}$ is a standard one-dimensional $\mathbb{Q}$-Brownian motion, and $b:[0,T]\times \mathbb{R}_{+}\rightarrow \mathbb{R}$ and $\sigma:[0,T]\times \mathbb{R}_{+}\rightarrow \mathbb{R}_{+}$ are two measurable functions satisfying the usual conditions ensuring the existence and uniqueness of a (strong) solution to the SDE \eqref{eq:dynamic}. Besides, we consider deterministic interest rates. 
\noindent Henceforth, we will always work under the risk-neutral probability measure $\mathbb{Q}$, since we focus on the pricing of derivative securities written on $X$. Specifically, we are interested in pricing financial derivative products whose payoff may depend on the whole path followed by the underlying asset, i.e. path-dependent options.

\medskip

Let us now motivate the introduction of our novel pricing algorithm. Even in the classical Black, Scholes, and Merton~\cite{black1973pricing,merton1973theory} framework, when pricing financial derivatives the actual analytical tractability is limited to plain vanilla call and put options and to few other cases (for instance, see the discussion in \cite{hui2000comment,vecer2004pricing}). Such circumstances motivate the quest for general and reliable pricing algorithms able to handle more complex contingent claims in more realistic stochastic market models. In this respect the Monte Carlo (MC) approach represents a natural candidate. Nevertheless, a general purpose implementation of the MC method is known to suffer from a low rate of convergence. In particular, in order to increase its numerical accuracy, it is either necessary to draw a large number of paths
or to implement tailor-made variance reduction techniques. Moreover, the standard MC estimator is strongly inefficient when considering out-of-the-money (OTM) options, since a relevant fraction of sampled paths does not contribute to the payoff function. 
\noindent For these reasons, we present a novel MC methodology which allows to effectively reduce the variance of the estimated price. To this end, we proceed as follows. First we introduce a discrete-time and discrete-space process $\widehat{\bar{X}}$ approximating the continuous time (and space) process $X$ in Equation \eqref{eq:dynamic}. Then, we propose a MC approach -- the backward Monte Carlo algorithm -- to sample paths from $\widehat{\bar{X}}$  and to compute derivative prices.

\medskip
In particular, we first split the time interval $[0,T]$ into $n$ equally-spaced subintervals $[t_{k}, t_{k+1}]$, $k \in \{ 0,\dots,n-1 \}$, with $t_{0}=0$, $t_{n}=T$ and we approximate the SDE in Equation \eqref{eq:dynamic} with an Euler-Maruyama scheme as follows:
\begin{equation}\label{eq:Euler_Maruyama}
  \begin{cases}
    \bar{X}_{t_{k+1}}=\bar{X}_{t_{k}}+b(t_{k},\bar{X}_{t_{k}})\Delta t+\sigma(t_{k},\bar{X}_{t_{k}})\sqrt{\Delta t} \ Z_k\,,\\
    \bar{X}_{t_{0}}=X_0= x_{0}\,,
  \end{cases}
\end{equation}
where $(Z_{k})_{0 \le k \le n-1}$ is a sequence of $i.i.d.$ standard Normal random variables and $\Delta t \doteq t_{k+1}-t_{k} = T/n$.  
\noindent Then, we assume that $\forall k \in \{1, \dots, n\}$ each random variable $\bar{X}_{t_{k}}$ in Equation \eqref{eq:Euler_Maruyama} can be approximated by a discrete random variable taking values in $\Gamma_{k}\doteq \{\gamma_{1}^{k},\dots, \gamma_{N}^{k}\}$, whereas for $t_0$ we have $\Gamma_{0}=\gamma^{0}=x_0$. We denote by $\widehat{\bar{X}}_{t_{k}}$ the discrete-valued approximation of the random variable $\bar{X}_{t_{k}}$. In this way we constrain the discrete-time Markov process $(\widehat{\bar{X}}_{t_{k}})_{1 \le k \le n}$ to live on a multinomial tree. Notice that, by definition $\vert \Gamma_{k} \vert = N$, all $k \in \{1, \dots, n\}$. Nevertheless, this is not the most general setting. For instance, within the \RMQA framework authors in ~\cite{sagna2013recursive} perform numerical experiments letting the number of points in the space discretisation grids vary over time. However, they underline how the complexity in the time varying case becomes higher as $N$ increases, although the difference in the results is negligible.

\noindent In order to define our pricing algorithm, we need the transition probabilities from a node at time $t_k$ to a node at time $t_{k+1}$, $k\in\{0,\dots,n-1\} $, so that in the next section we provide a detailed description of two different approaches to consistently approximate them. For the moment, we describe the backward Monte Carlo algorithm assuming the knowledge of both the multinomial tree $(\Gamma_{k})_{0 \le k\le n}$ and the transition probabilities.

\medskip

As aforementioned, our final target  is the computation at time $t_{0}$ of the fair price of a path-dependent option with maturity $T>0$. We denote by $\payoff$ its general discounted payoff function. In particular, it is a function of a finite number of discrete observations. We are not going to make precise $\payoff$ at this point, we only recall here that in this paper we will focus on Asian options, up-and-out barrier options and auto-callable options.
According to the arbitrage pricing theory~\cite{bjork2004arbitrage}, the price is given by the conditional expectation of the discounted payoff under the risk-neutral measure $\mathbb Q$, given the information available at time $t_0$. By means of the Euler-Maruyama discretisation, we can approximate the option price as follows:
\begin{equation*}
\mathbb{E}_{t_{0}}\left[\payoff(x_0, \bar X_{t_1}, \dots,\bar{X}_{t_{n}})\right]=\int_{\mathbb{R}^{n}} \payoff(x_0,x_1,\dots,x_n) \ p(x_0,x_1,\dots,x_n) \ \rd x_1\cdots \rd x_n \,,
\end{equation*}
where $p(x_0,x_1,\dots,x_n)$ is the joint probability density function (PDF) of $(\bar X_0, \bar X_{t_1}, \dots, \bar X_{t_n})$. The previous expression can be further approximated exploiting the process $\widehat{\bar{X}}$ and its discrete nature (recall that $\Gamma_{k} = \{\gamma_{1}^{k}, \dots, \gamma_{N}^{k} \}$):

\begin{equation}\label{eq:price_1}
\begin{split}
\mathbb{E}_{t_{0}}\left[\payoff(x_0,\bar X_{t_1},\dots, \bar{X}_{t_{n}})\right] & \simeq \mathbb{E}_{t_{0}}\left[ \payoff(x_0,\widehat{\bar{X}}_{t_{1}},\dots, \widehat{\bar{X}}_{t_{n}} ) \right]\\
&= \sum_{i_{1}=1}^{N}\dots \sum_{i_{n}=1}^{N}~\payoff(x_0,\gamma_{i_{1}}^{1}\dots,\gamma_{i_n}^{n}) \ \mathbb{P}(x_0,\gamma_{i_{1}}^{1},\dots,\gamma_{i_{n}}^{n}),
\end{split}
\end{equation}
where 
\begin{equation*}
\mathbb{P}(x_0,\gamma_{i_{1}}^{1},\dots,\gamma_{i_{n}}^{n}) \doteq \mathbb P (\widehat{\bar{X}}_{t_{0}} = x_0, \widehat{\bar{X}}_{t_{1}} = \gamma^1_{i_1}, \dots, \widehat{\bar{X}}_{t_{n}} = \gamma^n_{i_n}).
\end{equation*}
Exploiting the Markovian nature of $\widehat{\bar{X}}$ and using Bayes' theorem, we rewrite the right hand side of Equation \eqref{eq:price_1} in the following, equivalent, way:
\begin{equation}\label{eq:price_2}
\sum_{i_{1}=1}^{N}\dots \sum_{i_{n}=1}^{N}~\payoff(x_0,\gamma^1_{i_{1}},\dots,\gamma^n_{i_n}) \ \mathbb{P}(\gamma^1_{i_{1}}\vert x_0) \cdots \mathbb{P}(\gamma^n_{i_{n}}\vert \gamma^{n-1}_{i_{n-1}}),
\end{equation}
where
\begin{equation}
\mathbb{P}(\gamma_{i_{k+1}}^{k+1} \vert \gamma_{i_{k}}^{k}) \doteq \mathbb P(\widehat{\bar{X}}_{t_{k+1}} = \gamma_{i_{k+1}}^{k+1} \vert \widehat{\bar{X}}_{t_{k}} = \gamma_{i_{k}}^{k}),
\end{equation}
all $\gamma_{i_{k}}^{k}\in \Gamma_{k}$ and all $k \in \{1, \dots, n-1 \}$

\medskip

In order to compute the expression in Equation \eqref{eq:price_2}, a straightforward application of the standard MC theory would require the simulation of $N_{MC}$ paths all originating from $x_0$ at time $t_{0} = 0$. The same aforementioned arguments about the lack of efficiency of the MC estimator for the case of a continuum of state-spaces still hold for the discrete case. However, forcing each random variable $\bar{X}_{t_{k}}$, $1 \le k \le n$, to take at most $N$ values leads in general to a reduction of the variance of the Monte Carlo estimator.

\medskip

For each $t_k\in \{t_1,\dots,t_{n-1}\}$ we denote by $\Pi^{k, k+1}$ the $(N \times N)$-dimensional matrix whose elements are the transition probabilities:
\begin{equation*}
\Pi_{i,j}^{k, k+1} \doteq \mathbb{P}(\gamma^{k+1}_{j}\vert \gamma^k_{i}),\quad \gamma_i^k \in \Gamma_k,\text{ } \gamma_j^{k+1}\in \Gamma_{k+1},\text{ and  }i, j \in \{1, \dots, N \}.
\end{equation*}
The key idea behind the backward Monte Carlo algorithm is to express $\Pi_{i,j}^{k+1,k}$ as a function of $\Pi_{i,j}^{k, k+1} $ by applying Bayes' theorem:
\begin{equation}\label{eq:bayes}
\TPM{i}{j}^{k+1,k}=\frac{\TPM{i}{j}^{k, k+1}~\MP{i}^k}{\MP{j}^{k+1}} 
\end{equation}
where ${\MP{i}^{k}} \doteq \mathbb P(\widehat{\bar{X}}_{t_{k}} = \gamma_i^k\vert \bar{X}_{t_{0}} = x_{0})$ and ${\MP{j}^{k+1}} \doteq \mathbb P(\widehat{\bar{X}}_{t_{k+1}} = \gamma_j^{k+1} \vert \bar{X}_{t_{0}} = x_{0})$. Iteratively, we recover all the transition probabilities  which allow us to go through the multinomial tree in a backward way from each terminal point to the initial node $x_0$. In particular, relation in Equation \eqref{eq:bayes} permits to re-write the joint probability appearing in Equation\eqref{eq:price_1} and then in Equation \eqref{eq:price_2}  as 
\begin{equation*}
  \mathbb{P}(x_0,\gamma_{i_{1}}^{1},\dots,\gamma_{i_{n}}^{n})= \mathbb{P}(\gamma^1_{i_{1}}\vert x_0) \cdots \mathbb{P}(\gamma^n_{i_{n}}\vert \gamma^{n-1}_{i_{n-1}}) = \left(\prod_{k=0}^{n-1} \TPM{i_{k+1}}{i_{k}}^{k+1,k} \right) \MP{i_{n}}^n.
\end{equation*}
Consistently, we obtain the following proposition, containing the core of our pricing algorithm:\begin{proposition}
The price of a path-dependent option with discounted payoff $\payoff	$, \\$\mathbb{E}_{t_{0}}\left[\payoff(x_0,\bar X_{t_1},\dots, \bar{X}_{t_{n}})\right]$, can be approximated by:
\begin{equation*}
  \begin{split}
   \mathbb{E}_{t_{0}}\left[ \payoff(x_0,\widehat{\bar{X}}_{t_{1}},\dots, \widehat{\bar{X}}_{t_{n}} ) \right] 
& = \sum_{i_{n}=1}^{N} \MP{i_{n}}^n \sum_{i_{1}=1}^{N}\dots\sum_{i_{n-1}=1}^{N}\left(\prod_{k=0}^{n-1}\TPM{i_{k+1}}{i_{k}}^{k+1,k}\right)\payoff(x_0,\gamma^1_{i_{1}},\dots,\gamma^n_{i_n})\\
    &\doteq \sum_{i_{n}=1}^{N}\MP{i_{n}}^n \mathcal{F}(x_0,\gamma_{i_{n}}^{n}),
  \end{split}
\end{equation*}

\noindent where $\mathcal{F}(x_0,\gamma_{i_{n}}^{n})$ is the expectation of the payoff function $\payoff$ with respect to all paths starting at $x_0$ and terminating at $\gamma^n_{i_n}$. 
\end{proposition}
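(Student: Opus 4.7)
My plan is to show that the identity follows directly from three ingredients already assembled in the paper: the discretized price representation~\eqref{eq:price_2}, the Bayes relation~\eqref{eq:bayes}, and the Markov property of $\widehat{\bar X}$. The heart of the argument is just a re-bracketing of the joint law of the tree path, converting the chain of forward one-step kernels into a chain of backward one-step kernels multiplied by a single terminal marginal. Since all the algebra is discrete and finite, no analytical subtlety intervenes; the only work is careful index tracking.

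First, I would start from the right-hand side of~\eqref{eq:price_2}, which by construction equals $\mathbb{E}_{t_0}[\payoff(x_0,\widehat{\bar X}_{t_1},\dots,\widehat{\bar X}_{t_n})]$. The key observation is that the product of forward kernels $\mathbb{P}(\gamma^1_{i_1}\mid x_0)\cdots \mathbb{P}(\gamma^n_{i_n}\mid\gamma^{n-1}_{i_{n-1}})$ is precisely the joint probability $\mathbb{P}(x_0,\gamma^1_{i_1},\dots,\gamma^n_{i_n})$ by the Markov property used to derive~\eqref{eq:price_2} in the first place. I would then rewrite this joint probability in a backward-factored form: conditioning on $\widehat{\bar X}_{t_n}=\gamma^n_{i_n}$ and walking back in time through Bayes, one obtains
\begin{equation*}
\mathbb{P}(x_0,\gamma^1_{i_1},\dots,\gamma^n_{i_n})
= \MP{i_n}^n \,\prod_{k=0}^{n-1}\TPM{i_{k+1}}{i_k}^{k+1,k}.
\end{equation*}

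Second, to justify this re-factorisation I would substitute~\eqref{eq:bayes} into the forward product and exhibit the telescoping of the marginal factors. Writing each forward kernel $\TPM{i_k}{i_{k+1}}^{k,k+1}$ as $\TPM{i_{k+1}}{i_k}^{k+1,k}\MP{i_{k+1}}^{k+1}/\MP{i_k}^k$ and taking the product over $k=0,\dots,n-1$, the internal marginals $\MP{i_1}^1,\dots,\MP{i_{n-1}}^{n-1}$ each appear once in a numerator and once in a denominator and therefore cancel. The leftover factor in the numerator is $\MP{i_n}^n$; the leftover in the denominator is $\MP{0}^0=1$ (the process starts deterministically at $x_0$). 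This yields exactly the displayed factorisation.

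Third, I would substitute this expression into~\eqref{eq:price_2} and swap the order of summation to move $\sum_{i_n=1}^N$ to the outside, pulling $\MP{i_n}^n$ out of the inner sums because it depends only on the terminal index. The remaining inner sum $\sum_{i_1}\cdots\sum_{i_{n-1}}(\prod_k \TPM{i_{k+1}}{i_k}^{k+1,k})\payoff(x_0,\gamma^1_{i_1},\dots,\gamma^n_{i_n})$ is, by construction, the conditional expectation of $\payoff$ along backward trajectories from $\gamma^n_{i_n}$ to $x_0$, and is therefore identified with $\mathcal{F}(x_0,\gamma^n_{i_n})$ as defined in the statement. No interchange of limits is needed since all sums are finite, and Fubini applies trivially.

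The main (mild) obstacle is purely notational: keeping the two index positions of $\TPM{\cdot}{\cdot}^{\cdot,\cdot}$ consistent between forward and backward kernels so that the telescoping is visibly correct, and checking the boundary terms at $k=0$ (where $\MP{i_0}^0=1$) and at $k=n-1$ (where $\MP{i_n}^n$ survives). Once those are handled, the proposition reduces to an elementary rearrangement.
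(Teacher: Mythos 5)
Your proposal is correct and follows essentially the same route as the paper: the paper also converts the forward Markov factorisation of the joint law into the backward form via the Bayes relation \eqref{eq:bayes}, with the intermediate marginals telescoping so that only $\MP{i_n}^n$ survives (using $\MP{i_0}^0=1$), and then rearranges the finite sums in \eqref{eq:price_2}. Your explicit attention to the boundary factors and the index conventions of $\TPM{\cdot}{\cdot}^{\cdot,\cdot}$ is a welcome bit of extra care, but it does not constitute a different argument.
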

The expectation $\mathcal{F}(x_0,\gamma_{i_{n}}^{n})$ can be computed sampling $N_{MC}^{i_{n}}$ MC paths from the conditional law of $(\widehat{\bar{X}}_{t_{1}},\dots,\widehat{\bar{X}}_{t_{n-1}})$ given $x_0$ and $\gamma^n_{i_n}$, thus obtaining, at the same time, the error $\sigma_{i_{n}}$ associated to the MC estimator. By virtue of the Central Limit Theorem the errors scale with the square root of $N_{MC}^{i_{n}}$, so that the larger $N_{MC}^{i_{n}}$ is, the smaller the error. In particular, if we indicate with $\widetilde{\Gamma}_{n}=\{\tilde{\gamma}_{1}^{n},\dots,\tilde{\gamma}_{N^+}^{n}\}$, with $N^+\leq N$, those points of $\Gamma_{n}$ for which the payoff $\payoff$ is different from zero, we estimate the boundary values corresponding to the $95\%$ confidence interval for the derivative price as
\begin{equation}\label{eq:price_4}
  \sum_{i_{n}=1}^{N^+}\MP{i_{n}}^n \widehat{\mathcal{F}(x_0,\tilde{\gamma}_{i_{n}}^{n})}\pm 1.96 \sqrt{\sum_{i_{n}=1}^{N^+}(\MP{i_{n}}^n \sigma_{i_{n}})^{2}}\,,
\end{equation}
where $\widehat{\mathcal{F}(x_0,\tilde{\gamma}_{i_{n}}^{n})}$ corresponds to the Monte Carlo estimator of $\mathcal{F}(x_0,\tilde{\gamma}_{i_{n}}^n)$. It is worth noticing that the error in Equation~\eqref{eq:price_4} does not take into account the effect of finiteness of $\widetilde{\Gamma}_{n}$.\\
\noindent A sizeable variance reduction results from having split the $n$ sums in Equation \eqref{eq:price_2} into the external summation over the points of the deterministic grid $\widetilde{\Gamma}_{n}$ and the evaluation of an expectation of the payoff with fixed initial and terminal points. This procedure corresponds to the variance reduction technique known as stratified sampling MC \cite{glasserman2003monte}. In particular, in \cite{glasserman2003monte} authors prove analytically that the variance of the MC estimator without stratification is always greater than or equal to that of the stratified one. As pointed out in \cite{glasserman2003monte} stratified sampling involves consideration of two issues: (i) the choice of the points in $\Gamma_{n}$ and the allocation $N_{MC}^{i_n}$, $i_{n} \in \{1, \dots, N\}$, (ii) the generation of samples from $\widehat{\bar{X}}$ conditional on $\widehat{\bar{X}}_{t_{n}} \in \Gamma_{n}$ and on $\widehat{\bar{X}}_{t_{0}} = \gamma_{0}$. Our procedure resolves both these points. Precisely, once selected $\widetilde{\Gamma}_{n}$, the Backward Monte Carlo algorithm allows us to choose the number of paths from all the points in $\widetilde{\Gamma}_{n}$, independently on the value of $P_{i_{n}}^{n}$.

\medskip

At this point, two are the main ingredients needed in order to compute the quantities in Equation \eqref{eq:price_4}: (i) the transition probabilities, (ii) the fast backward simulation of the process $\widehat{\bar{X}}$. For both purposes, we introduce ad-hoc numerical procedures. As regards the former point, we analyse and extend two approaches already present in the literature: the first one is based on the concept of optimal state-partitioning of a random variable (called stratification in \cite{barraquand1995numerical} and quantization in \cite{bally2003quantization}) and employs the \RMQA~\cite{sagna2013recursive, callegaro2015pricing}. The second approach provides a recipe to compute in an effective way the transition probability matrix between any two arbitrary dates for a piecewise time-homogeneous process~\cite{reghai2012local,albanese2007operator}. More details on these two methods will be given in Section \ref{sec:transition}.

\medskip

For what concerns the backward simulation, we employ the Alias method introduced in~\cite{kronmal1979alias}. More specifically, for every $k$ from $n-1$ to $1$, the (backward) simulation of $\widehat{\bar{X}}_{t_{k}}$ conditional on $\{ \widehat{\bar{X}}_{t_{k+1}}=\gamma^{k+1}_j \}$ is equivalent to sampling at each time $t_{k+1}$ from a discrete non-uniform distribution with support $\Gamma_{k}$ and probability mass function equal to the $j$-th row of $\Pi^{k+1,k}$.
Given the discrete distribution, a na\"{i}ve simulation scheme consists in drawing a uniform random number from the interval $[0,1]$ and recursively search over the cumulative sum of the discrete probabilities. However, in this case the corresponding computational time grows linearly with the number $N$ of states. The Alias method, instead, reduces this numerical complexity to $O(1)$ by cleverly pre-computing a table -- the Alias table -- of size $N$. We base our implementation on this method, which enables a large reduction of the MC computation time. A more detailed description of the Alias method can be found at \url{www.keithschwarz.com}.

\section{Recoverying the transition probabilities}\label{sec:transition}
We present the two approaches used for the approximation of the transition probabilities of a discrete-time Markov Chain. The \RMQA is described and extended in Section~\ref{subsec:rmq_algorithm}. In particular, we first provide a brief overview on optimal quantization of a random variable and then we propose an alternative implementation of the \RMQA. The \LTSA is presented in Section~\ref{subsec:fast_exponentiation}, where we also provide a brief introduction on Markov processes and generators.

\subsection{A quantization based algorithm}\label{subsec:rmq_algorithm}
The reader who is familiar with quantization can skip the following subsection.
\subsubsection{Optimal quantization}
We present here the concept of optimal quantization of a random variable by emphasizing its practical features, without providing all the mathematical details behind it. A more extensive discussion can be found e.g. in~\cite{printems2005functional,graf2000foundations,pages2014introduction,pages2004optimal}.

Let $\bar{X}$ be a one-dimensional continuous random variable defined on a probability space $(\Omega, \mathcal{F}, \mathbb{P})$ and $\mathbb{P}_{\bar X}$ the measure induced by it. The quantization of $\bar{X}$ consists in approximating it by a one-dimensional discrete random variable $\widehat{\bar{X}}$. In particular, this approximation is defined by means of a quantization function $q_{N}$ of $\bar X$, that is to say $\widehat{\bar{X}}\doteq q_{N}(\bar X)$, defined in such a way that $\widehat{\bar{X}}$ takes $N\in \mathbb{N}^{+}$ finitely many values in $\mathbb{R}$. The finite set of values for $\widehat{\bar{X}}$, denoted by $\Gamma \equiv \{\gamma_{1},\dots,\gamma_{N}\}$, is the quantizer of $\bar X$, while the image of the function $q_{N}$ is the related quantization. The components of $\Gamma$ can be used as generator points of a Voronoi tessellation $\{C_{i}(\Gamma)\}_{i=1,\dots,N}$. In particular, one sets up the following tessellation with respect to the absolute value in $\mathbb{R}$
\begin{equation*}
  C_{i}(\Gamma)\subset \{\gamma \in \mathbb{R} : \vert \gamma - \gamma_{i}\vert=\min_{1\le j \le N}\vert \gamma-\gamma_{j}\vert \}\,,
\end{equation*}%
and the associated quantization function $q_{N}$ is defined as follows:
\begin{equation*}
  q_{N}(\bar X)=\sum_{i=1}^{N}\gamma_{i} \indicator_{C_{i}(\Gamma)}(\bar X)\,.
\end{equation*}

Notice that in our setting, we are going to quantize the random variables  ${(\bar X_{t_k})}_{0 \le k \le n}$ introduced in Equation \eqref{eq:Euler_Maruyama}.

Such a construction rigorously define a probabilistic setting for the random variable $\widehat{\bar{X}}$ , by exploiting the probability measure induced by the continuous random variable $\bar X$. The approximation of $\bar X$ through $\widehat{\bar{X}}$  induces an error, whose $L^2$ version -- called $L^{2}$-mean quantization error -- is defined as
\begin{equation}\label{eq:error}
  \| \bar X-q_{N}(\bar X)\|_{2}\doteq \sqrt{\mathbb{E}\left[\min_{1 \le i \le N}| \bar X-\gamma_{i}|^{2}\right]}\,.
\end{equation}%
The expected value in Equation~\eqref{eq:error} is computed with respect to the probability measure which characterizes the random variable $\bar X$. The purpose of the optimal quantization theory is finding a quantizer\footnote{In one dimension the uniqueness of the optimal $N$ quantizer is guaranteed if the distribution of $\bar X$ is absolutely continuous with a log-concave density function~\cite{pages2014introduction}.}  indicated by $\Gamma^{*}$, which minimizes the error in Equation~\eqref{eq:error} over all possible quantizers with size at most $N$.\\ 
From the theory (see, for instance ~\cite{graf2000foundations}) we know that the mean quantization error vanishes as the grid size $N$ tends to infinity and its rate of convergence is ruled by Zador theorem. However, computationally, finding explicitly $\Gamma^{*}$ can be a challenging task. This has motivated the introduction of sub-optimal criteria linked to the notion of stationary quantizer~\cite{pages2014introduction}:
\begin{definition}
 A quantizer $\Gamma\equiv \{\gamma_{1},\dots,\gamma_{N}\}$ inducing the quantization $q_{N}$ of the random variable $\bar X$ is said to be $stationary$ if
\begin{equation}\label{eq:stationarity}
    \mathbb{E} \left[ \bar X\vert q_{N}(\bar X)\right]=q_{N}(\bar X)\,.
\end{equation}%
\end{definition}

\begin{rem}
An optimal quantizer is stationary, the vice-versa does not hold true in general (see, for instance ~\cite{pages2014introduction}).
\end{rem}

In order to compute optimal (or sub-optimal) quantizers, one first introduces a notion of distance between a random variable $\bar X$ and a quantizer $\Gamma$ \begin{equation*}
  d(\bar X,\Gamma)\doteq  \min_{1 \le i \le N}| \bar X-\gamma_{i}|\,,
\end{equation*} 
and then one considers the so called distortion function
\begin{equation}\label{eq:distortion}
  D(\Gamma)\doteq \mathbb{E}\left[d(\bar X,\Gamma)^{2} \right]=\mathbb{E}\left[\min_{1 \le i \le N}| \bar X-\gamma_{i}|^{2}\right]=\sum_{i=1}^{N}\int_{C_{i}(\Gamma)}\vert \xi-\gamma_{i}\vert^2 \,\rd\mathbb{P}_{\bar X}(\xi)\,.
\end{equation}
It can be shown (see, for instance \cite{pages2014introduction}) that the distortion function is continuously differentiable as a function of $\Gamma$. In particular, it turns out that stationary quantizers are critical points of the distortion function, that is, a stationary quantizer $\Gamma$ is such that $\triangledown D(\Gamma)=0$.\\ 
Several numerical approaches have been proposed in order to find stationary quantizers (for a review see~\cite{pages2004optimal}). These approaches can be essentially divided into two categories: gradient-based methods and fixed-point methods. The former class includes the Newton-Raphson algorithm, whereas the second category includes the Lloyd I algorithm~\cite{kieffer1982exponential}. More specifically, the Newton-Raphson algorithm requires the computation of the gradient, $\triangledown D(\Gamma)$, and of the Hessian matrix, $\triangledown^{2} D(\Gamma)$, of the distortion function. The Lloyd I algorithm, on the other hand, does not require the computation of the gradient and Hessian and it consists in a fixed-point algorithm based on the stationary Equation~\eqref{eq:stationarity}.

\subsubsection{The Recursive Marginal Quantization Algorithm}
The \RMQA is a recursive algorithm, which has been recently introduced by G. Pag\`es and A. Sagna in~\cite{sagna2013recursive}. It consists in quantizing the stochastic process $X$ in Equation \eqref{eq:dynamic} by working on the (marginal) random variables $\bar{X}_{t_{k}}$, all $k \in \left\{1, \dots, n \right\}$ in \eqref{eq:Euler_Maruyama}. The key idea behind the \RMQA is that the discrete-time Markov process $\bar{X} = (\bar{X}_{t_{k}})_{0\le k \le n}$ in Equation \eqref{eq:Euler_Maruyama} is completely characterized by the initial distribution of $\bar{X}_{t_{0}}$ and by the transition probability densities. We indicate by $\widehat{\bar{X}}_{t_{k}}$ the quantization of the random variable $\bar{X}_{t_{k}}$ and by $\bar{D}(\Gamma_{k})$ the associated distortion function.

\medskip

\begin{rem}
  The process $\widehat{\bar{X}} = (\widehat{\bar{X}}_{t_{k}})_{0 \le k \le n}$ is not, in general, a discrete-time Markov Chain. Nevertheless, it is known (see, for instance~\cite{pages2004optimal}) that there exists a discrete-time Markov Chain, $\widehat{\bar{X}}^{c}\doteq(\widehat{\bar{X}}^{c}_{t_{k}})_{0 \le k\le n}$, with initial distribution and transition probabilities equal to those of $\widehat{\bar{X}}$. Hence, throughout the rest of the paper, when we will write ``discrete-time Markov Chain'' within the Recursive Marginal Quantization framework we will refer, by tacit agreement, to the process $\widehat{\bar{X}}^{c}$.
\end{rem}
Here we give a quick drawing of the \RMQA. First of all, one introduces the Euler operator associated to the Euler scheme in Equation \eqref{eq:Euler_Maruyama}:
\begin{equation*}
\mathcal{E}_{k}(x,\Delta t;Z)\doteq x+b(t_{k},x)\Delta+\sigma(t_{k},x)\sqrt{\Delta t} \ Z
\end{equation*}
where $Z	\sim\mathcal{N}(0,1)$, so that, from \eqref{eq:Euler_Maruyama}, $\bar{X}_{t_{k+1}}=\mathcal{E}_{k}(\bar{X}_{t_{k}},\Delta t; Z_k)$. 
\begin{lem}\label{lem:condDistrib}
Conditionally on the event $\{\bar{X}_{t_{k}}= x \}$, the random variable $\bar{X}_{t_{k+1}}$ is a Gaussian random variable with mean $m_{k}(x) = x + b(t_{k},x) \Delta$ and standard deviation $v_{k}(x)=\sqrt{\Delta}\sigma(t_{k},x)$, all $k = 1, \dots, n-1$.
\end{lem}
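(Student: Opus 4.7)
The proof is essentially immediate from the structure of the Euler--Maruyama scheme in Equation~\eqref{eq:Euler_Maruyama}, so the plan is mostly about organizing the conditioning argument cleanly rather than overcoming any real obstacle. The idea is to use that $\bar{X}_{t_{k+1}}$ is obtained from $\bar{X}_{t_k}$ by adding a drift term and a scaled Gaussian innovation, and that the innovation is independent of the history up to time $t_k$.

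First I would recall that by construction of the scheme we have
\begin{equation*}
\bar{X}_{t_{k+1}} \;=\; \mathcal{E}_{k}(\bar{X}_{t_{k}}, \Delta t; Z_k) \;=\; \bar{X}_{t_{k}} + b(t_k, \bar{X}_{t_k})\,\Delta t + \sigma(t_k, \bar{X}_{t_k})\,\sqrt{\Delta t}\,Z_k,
\end{equation*}
with $Z_k \sim \mathcal{N}(0,1)$. Next I would observe that by iterating the recursion, $\bar{X}_{t_k}$ is a measurable function of $(Z_0, \dots, Z_{k-1})$ and the deterministic initial datum $x_0$; since the sequence $(Z_j)_{0\le j \le n-1}$ is i.i.d., $Z_k$ is independent of $\sigma(Z_0, \dots, Z_{k-1})$ and hence independent of $\bar{X}_{t_k}$.

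Given this independence, I would condition on the event $\{\bar{X}_{t_k} = x\}$ (or more precisely use the ``freezing lemma'' for conditional expectations): on this event the coefficients $b(t_k, \bar{X}_{t_k})$ and $\sigma(t_k, \bar{X}_{t_k})$ become the deterministic constants $b(t_k, x)$ and $\sigma(t_k, x)$, while $Z_k$ retains its unconditional $\mathcal{N}(0,1)$ law. Therefore, conditionally on $\{\bar{X}_{t_k}=x\}$, the random variable $\bar{X}_{t_{k+1}}$ is an affine function of a standard normal, namely
\begin{equation*}
\bar{X}_{t_{k+1}} \;\big|\; \{\bar{X}_{t_k}=x\} \;\stackrel{d}{=}\; \bigl(x + b(t_k, x)\Delta t\bigr) + \sqrt{\Delta t}\,\sigma(t_k, x)\,Z_k,
\end{equation*}
which is Gaussian with mean $m_k(x) = x + b(t_k, x)\Delta t$ and standard deviation $v_k(x) = \sqrt{\Delta t}\,\sigma(t_k, x)$, as claimed.

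The only item worth a brief comment, rather than an obstacle, is the independence between $Z_k$ and $\bar{X}_{t_k}$: it is not built in by hypothesis but follows from the i.i.d.\ assumption on $(Z_j)$ together with the explicit measurability of $\bar{X}_{t_k}$ with respect to $\sigma(Z_0, \dots, Z_{k-1})$, which is what legitimises the ``freezing'' step. Once this is noted, no further computation is needed.
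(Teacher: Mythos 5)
Your proof is correct and follows essentially the same route as the paper, which simply invokes the equality $\bar{X}_{t_{k+1}}=\mathcal{E}_{k}(\bar{X}_{t_{k}},\Delta t; Z_k)$ together with $Z_k\sim\mathcal{N}(0,1)$. You merely spell out the detail the paper leaves implicit, namely that $Z_k$ is independent of $\bar{X}_{t_k}$ (via measurability of $\bar{X}_{t_k}$ with respect to $\sigma(Z_0,\dots,Z_{k-1})$), which is exactly what justifies the freezing step.
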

\begin{proof} It follows immediately from the equality $\bar{X}_{t_{k+1}}=\mathcal{E}_{k}(\bar{X}_{t_{k}},\Delta t; Z_k)$, given that $Z_k$ is a standard normal random variable.
\end{proof}
At this point, one writes down the following crucial equalities:
\begin{equation}\label{eq:distortion_2}
  \begin{split}
    \bar{D}(\Gamma_{k+1})&=\mathbb{E}\left[d(\bar{X}_{t_{k+1}},\Gamma_{k+1})^2\right]=\mathbb{E}\left[\mathbb{E}\left[d(\bar{X}_{t_{k+1}},\Gamma_{k+1})^2\vert\bar{X}_{t_{k}}\right]\right]=\mathbb{E}\left[d(\mathcal{E}_{k}(\bar{X}_{t_{k}},\Delta t; Z_{k}),\Gamma_{k+1})^2\right]\, ,
  \end{split}
\end{equation}
where ${(Z_{k})}_{0\le k \le n}$ is a sequence of \textit{i.i.d.} one-dimensional standard normal random variables. As said, stationary quantizers are zeros of the gradient of the distortion function. By definition, the distortion function $\bar{D}(\Gamma_{k+1})$ depends on the distribution of $\bar{X}_{t_{k+1}}$, which is, in general, unknown. Nevertheless, thanks to Lemma \ref{lem:condDistrib}, the distortion in Equation \eqref{eq:distortion_2} can be computed explicitly.
\noindent Equation ~\eqref{eq:distortion_2} is the essence of the \RMQA. More precisely, one starts setting the quantization of $\bar{X}_{t_{0}}$ to $x_0$, namely $q_{N}(\bar{X}_{t_{0}})=x_{0}$. Then, one approximates $\bar{X}_{t_{1}}$ with $\widetilde{X}_{t_{1}}\doteq \mathcal{E}_{0}(x_{0},\Delta t; Z_{{1}})$ and the distortion function associated to $\bar{X}_{t_{1}}$ with that associated to $\widetilde{X}_{t_{1}}$, namely $\bar{D}(\Gamma_{1})\approx \widetilde{D}(\Gamma_{1})\doteq \mathbb{E}\left[d(\mathcal{E}_{0}(x_{0},\Delta t; Z_{{1}}),\Gamma_{1})^2\right]$. Then, one looks for a stationary quantizer $\Gamma_1$ by searching for a zero of the gradient of the distortion function, using either Newton-Raphson or Lloyd I method. The procedure is applied iteratively at each time step $t_{k}$, $1 \le k \le n$, leading to the following sequence of stationary (marginal) quantizers:
\begin{equation*}
  \begin{split}
    &\widehat{\widetilde{X}}_{t_{0}}\doteq \bar{X}_{t_{0}}\,,\\
    &\widehat{\widetilde{X}}_{t_{k}}=q_{N}(\widetilde{X}_{t_{k}})\quad\mathrm{and}\quad\widetilde{X}_{t_{k+1}}=\mathcal{E}_{k}(\widehat{\widetilde{X}}_{t_{k}},\Delta t; Z_{{k+1}})\,,\\
    &(Z_{{k}})_{1\le k\le n}\quad \text{\textit{i.i.d.} Normal random variables independent from } \bar{X}_{t_{0}}\,.
  \end{split}
\end{equation*}%
In~\cite{sagna2013recursive} the authors give an estimation of the (quadratic) error bound $\|\bar{X}_{t_{k}}-\widehat{\bar{X}}_{t_{k}}\|_{2}$, for fixed $k=1,\dots,n$. 

\noindent
At this point, the approximated transition probabilities (termed companion parameters in \cite{sagna2013recursive}) are obtained instantaneously given the quantization grids and Lemma \ref{lem:condDistrib}. In particular:
\begin{equation}
\Pi^{k,k+1}_{i,j} = \mathbb{P}(\gamma_{j}^{k+1}\vert \gamma_{i}^{k}) \approx \mathbb{P}\left(\widetilde{X}_{t_{k+1}}\in C_{j}(\Gamma_{k+1}) \vert \widetilde{X}_{t_{k}} \in C_{i}(\Gamma_{k})\right).
\end{equation}

In Appendix \ref{sec:appendix_2} we provide the explicit expressions of the distortion function $\widetilde{D}(\Gamma_{k+1})$ and of the approximated transition probabilities $\mathbb{P}\left(\widetilde{X}_{t_{k+1}}\in C_{j}(\Gamma_{k+1}) \vert \widetilde{X}_{t_{k}} \in C_{i}(\Gamma_{k})\right)$.

\medskip

In order to compute numerically the sequence of stationary quantizers $(\Gamma_{k})_{1\le k \le n}$ in ~\cite{callegaro2015pricing, sagna2013recursive} authors employ the Newton-Raphson algorithm. However, as pointed out in~\cite{callegaro2015pricing}, it may become unstable when $\Delta t \rightarrow 0$ due to the ill-condition number of the Hessian matrix $\triangledown^{2} D(\Gamma)$. An alternative approach is based on fixed-point algorithms, such as the Lloyd I method, even though such method converges to the optimal solution with a smaller rate of convergence (see~\cite{kieffer1982exponential} for a discussion). For these reason and as original contribution we combine it with a particular acceleration scheme, called Anderson Acceleration.

\subsubsection{The Anderson accelerated procedure}
The acceleration scheme, called \textit{Anderson} \textit{acceleration}, was originally discussed in Anderson~\cite{anderson1965iterative}, and outlined in~\cite{walker2011anderson} together with some practical considerations for implementations. For completeness, in Appendix \ref{sec:appendix_1} we give some details on how the Lloyd I method works when employed in the RMQ setting.

\medskip

Now, we discuss the major differences between a general fixed-point algorithm -- and its associated fixed-point iterations -- and the same fixed-point method coupled with the Anderson acceleration. We outline the practical features without giving all the technical details concerning the numerical implementation of the accelerated scheme (please refer to~\cite{walker2011anderson} for an extensive discussion on this issue). 

A general fixed-point problem -- also known as Picard problem -- and its associated fixed-point iteration are defined as follows:
\begin{equation}\label{eq:fixed_point}
  \begin{split}
    \textbf{Fixed-point problem}:&\,\,\text{Given}\,\,g:\mathbb{R}^{N}\rightarrow \mathbb{R}^{N}\,,\quad\text{find} \ \Gamma \in \mathbb R^N \ \text{s.t.} \quad \Gamma=g(\Gamma).\\
    \emph{\textbf{Algorithm}}&\,\,\textbf{(Fixed}\,\,\textbf{Point}\,\,\textbf{Iteration)}\\
    &\text{Given}\quad\Gamma^{0},\\
    &\text{for}\,\,l \ge 0, l \in \mathbb N \\
    &\,\,\,\,\,\,\,\,\,\,\,\,\text{set}\,\,\Gamma^{l+1}=g(\Gamma^{l})\,.
  \end{split}
\end{equation}
The same problem coupled with the Anderson acceleration scheme is modified as follows:
\begin{equation}\label{eq:fixed_point_acc}
  \begin{split}
    &\emph{\textbf{Algorithm}}\,\,\textbf{(Anderson}\,\,\textbf{acceleration)}\\
    &\text{Given}\,\,\Gamma^{0}\,\,\text{and}\,\,m\ge 1\,\,, m\in \mathbb{N} ,\\
    &\text{set}\,\,\Gamma^{1}=g(\Gamma^{0}),\\
    &\text{for}\,\,l \ge 1, l \in \mathbb N\\
    &\,\,\,\,\,\,\,\,\,\,\,\,\text{set}\,\,m_{l} =\min(m,l)\\
    &\,\,\,\,\,\,\,\,\,\,\,\,\text{set}\,\,F_{l}=(f_{l-m_{l}},\dots,f_{l}),\,\,\text{where}\,\,f_{i}=g(\Gamma^{i})-\Gamma^{i}\\
    &\,\,\,\,\,\,\,\,\,\,\,\,\text{determine}\,\,\alpha^{(l)}=(\alpha_{0}^{(l)},\dots,\alpha_{m_l}^{(l)})^{T}\,\,\text{that solves}\\
    &\,\,\,\,\,\,\,\,\,\,\,\,\,\,\,\,\,\,\,\,\,\,\,\,\min_{\alpha^{l}\in \mathbb{R}^{m_{l}+1}}\|F_{l}\alpha^{(l)}\|_{2}\,\,\text{s.t.}\,\,\sum_{i=0}^{m_{l}}\alpha_{i}^{(l)}=1\\
    &\,\,\,\,\,\,\,\,\,\,\,\,\text{set}\,\,\Gamma^{l+1}=\sum_{i=0}^{m_{l}}\alpha_{i}^{(l)}g(\Gamma_{l-m_{l}+i}).\\
  \end{split}
\end{equation}
The Anderson acceleration algorithm stores (at most) $m$ user-specified previous function evaluations and computes the new iterate as a linear combination of those evaluations with coefficients minimising the Euclidean norm of the weighted residuals. In particular, with respect to the general fixed-point iteration, Anderson acceleration exploits more information in order to find the new iterate. 

In Equation~\eqref{eq:fixed_point_acc} Anderson acceleration algorithm allows to monitor the conditioning of the least squares problem. In particular, we follow the strategy used in~\cite{walker2011anderson} where the constrained least squares problem is first casted in an unconstrained one, and then solved using a QR decomposition. The usage of the QR decomposition to solve the unconstrained least square problem represents a good balance of accuracy and efficiency. Indeed, if we name $\mathcal{F}_{l}$ the least squares problem matrix, it is obtained from its predecessor $\mathcal{F}_{l-1}$ by adding a new column on the right. The QR decomposition of $\mathcal{F}_{l}$ can be efficiently attained from that of $\mathcal{F}_{l-1}$ in $O(m_{l}N)$ arithmetic operations using standard QR factor-updating techniques (see~\cite{golub2012matrix}).

The Anderson acceleration scheme speeds up the linear rate of convergence of the general fixed-point problem without increasing its computational complexity. More importantly, it does not suffer the extreme sensitivity of the Newton-Raphson method to the choice of the initial point (grid). 
We refer to the numerical experiments in Appendix \ref{sec:appendix_3} for an illustration of both the improvement of the Anderson acceleration with respect to the convergence speed of the fixed point iteration and of the over-performance of Lloyd I method with respect to the stability of the Newton-Raphson algorithm.
Appendix \ref{sec:appendix_3} is by no means intended to be exhaustive, since it illustrates the performance of the Anderson acceleration algorithm in some examples. 

\subsection{The Large Time Step Algorithm}\label{subsec:fast_exponentiation}
The \LTSA is employed to recover the transition probability matrix associated to a time and space discretisation of a LV model. We start here by recalling some known results about Markov processes, that will be used in what follows.
We work under the following assumption:

\begin{assumption}
The asset price process $X$ follows the dynamics in Equation \eqref{eq:dynamic}, where the drift and diffusion coefficients $b$ and $\sigma$ are piecewise-constant functions of time.
\end{assumption}

Let us consider the Markov process $X$ in Equation \eqref{eq:dynamic} and let us denote by $p(t^{'},\gamma^{'} \vert t, \gamma)$, with $0 \le t < t' \le T$ and $\gamma, \gamma' \in \mathbb R$, the transition probability density from state $\gamma$ at time $t$ to state $\gamma'$ at time $t'$. Under some non stringent assumptions, it is known that $p$, as a function of the backward variables $t$ and $\gamma$, satisfies the backward Kolmogorov equation (see \cite{karatzas2012brownian,kijima1997markov}):
\begin{equation}\label{eq:kolmogorow}
  \begin{split}
    &\frac{\partial p}{\partial t}(t',\gamma'|t,\gamma)+(\mathcal{L}p)(t',\gamma'|t,\gamma)=0\quad\text{for}\quad (t,\gamma) \in (0,t') \times \mathbb R \,,\\
    &p(t,\gamma'|t,\gamma)=\delta(\gamma-\gamma^{'})\, ,
  \end{split}
\end{equation}
where $\delta$  is the Dirac delta and $\mathcal{L}$ is the infinitesimal operator associated with the SDE \eqref{eq:dynamic}, namely a second order differential operator acting on functions $f: \mathbb{R}_+ \times \mathbb{R} \rightarrow \mathbb{R}$ belonging to the class $C^{1,2}$ and defined as follows:
\begin{equation}\label{eq:markov_generator}
  (\mathcal{L} f)(t,\gamma)=b(t,\gamma)\frac{\partial f}{\partial \gamma}(t,\gamma)+\frac{1}{2}\sigma^2(t,\gamma) \frac{\partial^{2}f}{\partial \gamma^{2}} (t,\gamma)\,.
\end{equation}
The solution to Equation~\eqref{eq:kolmogorow} can be formally written as
\begin{equation}\label{eq:transDensity}
  p(t^{'},\gamma^{'}| t,\gamma)=\re^{(t^{'}-t)\mathcal{L}}p(t,\gamma).
\end{equation}

\medskip

The \LTSA consists in approximating the transition probabilities relative to a discrete-time finite-state Markov chain approximation of $X$ using Equation \eqref{eq:transDensity}. We report now a simple example to clarify how the \LTSA works.
\begin{ex}\label{ex:bsigma}
Consider for example the case when $b$ and $\sigma$ in Equation \eqref{eq:dynamic} are defined as:
\begin{equation*}
  \begin{split}
&b(t,X_{t})=b_{1}(X_{t})\indicator_{[0,T_1]} (t)+b_{2}(X_{t}) \indicator_{[T_{1}, T_{2}]}(t)\,,\\
&\sigma(t,X_{t})=\sigma_{1}(X_{t}) \indicator_{[0,T_1]} (t) + \sigma_{2}(X_{t}) \indicator_{[T_{1}, T_{2}]}(t) \,,
  \end{split}
\end{equation*}
where $T_{1}$ and $T_{2}=T$ are two target maturities and $b_1,b_2,\sigma_1,\sigma_2$ suitable functions. 
The transition probabilities in this case are explicitly given. In particular, if we denote by $\mathcal{L}_{\Gamma}^{1}$ and $\mathcal{L}_{\Gamma}^{2}$ the infinitesimal Markov generators of the Markov chain approximation of $X$ in $[0,T_{1}]$ and $[T_{1},T_{2}]$ respectively, the transition probabilities between any two arbitrary dates $t$ and $t'$ are given by:
\begin{equation*}\label{eq:receipe_tpd}
  \begin{split}
&\re^{(t' -t)\mathcal{L}_{\Gamma}^{1}}\text{\,\,\,\,\,\,\,\,\,\,\,\,\,\,\,\,\,\,\,\,\,\,\,\,\,\,\,\,\,\,\,\,\,\,\,\,for}\quad 0\leq t \leq t' \leq T_1\,,\\
&\re^{(T_1-t) \mathcal{L}_{\Gamma}^{1}}\re^{(t' -T_{1})\mathcal{L}_{\Gamma}^{2}}\qquad\text{for}\quad 0\leq t \leq T_1\leq t' \leq T_2\,,\\
&\re^{(t' - t) \mathcal{L}_{\Gamma}^{2}}\qquad\text{\,\,\,\,\,\,\,\,\,\,\,\,\,\,\,\,\,\,\,\,\,\,\,for}\quad T_1\leq t \leq t' \leq T_2\,.
  \end{split}
\end{equation*} 
In real market situations the above assumption on $b$ and on $\sigma$ is not at all restrictive, as we are going to see in Section \ref{sec:Numerical}.
\end{ex}

Let us now give more details on the algorithm. First of all, once a time discretisation grid $\{u_0, u_1, \dots, u_m \}$ has been chosen (think for example to the calibration pillars or to the expiry dates of the calibration dates of vanilla options), we need to obtain the space discretisation grids $\Gamma_k, 0 \le k \le m$.
Here these grids do not stem from the minimization of any distortion function, since they are defined quite flexibly as follows: 
$$
\Gamma_{0}\equiv x_0 \quad \text{and}\quad \Gamma_{k}\equiv \Gamma \doteq \{\gamma_{1},\dots,\gamma_{N}\},  \quad k = 1,\ldots,m. 
$$ 
This represents a major difference with respect to the \RMQA.

Then, the method consists in discretizing, opportunely, the Markov generator $\mathcal{L}$ and in calculating, in an effective and accurate way, the transition probabilities. As regards the discretisation, ~\cite{albanese2007convergence} gives a recipe to construct the discrete counterpart of $\mathcal{L}$ -- denoted by $\mathcal{L}_{\Gamma}$ -- so that the Markov chain approximation of $X$ converges to the continuous limit process in a weak or distributional sense~\cite{kushner2013numerical}. In particular, $\mathcal{L}_{\Gamma}$ corresponds to the discretisation of Equation~\eqref{eq:markov_generator} through an explicit Euler finite difference approximation of the derivatives~\cite{mitchell1980finite}. In Appendix \ref{sec:appendix_4} we provide more details on the discretisation of $\mathcal{L}$. 

Once $\mathcal{L}_{\Gamma}$ is constructed, one writes a (matrix) Kolmogorov equation for the transition probability matrix. In particular, using operator theory~\cite{albanese2007operator}, the transition probability matrix between any two arbitrary dates $u_k$ and $u_{k'}$ with $0 \le u_k<u_{k'} \le T$ can be expressed as a matrix exponential.

\begin{rem}
The piecewise time-homogeneous feature of the process $X$ plays a crucial role as regards the computational burden required to compute the transition probability matrix. Indeed, in case of time-dependent drift and volatility coefficient, it can no longer be expressed, in a straightforward way, as the exponential of the (time-dependent) Markov generator $\mathcal{L}_{\Gamma}$ (see, for instance ~\cite{blanes2009magnus}).
\end{rem}
The \LTSA is computationally convenient with respect to the \RMQA when pricing path-dependent derivatives whose payoff specification requires the observation of the asset price on a pre-specified set of dates, for example, $\{u_{0},u_{1},\dots,u_{m}\}$. Indeed, in this case we first calculate off-line the $m$ transition matrices as in Equation \eqref{eq:receipe_tpd}, then we price the derivative products via Monte Carlo with coarse-grained resolution.
In Figure~\ref{fig:LTSA} we plot an example of a possible path corresponding to the case $m=3$.
\begin{figure}[!ht]
  \centering
  \includegraphics[scale=0.3]{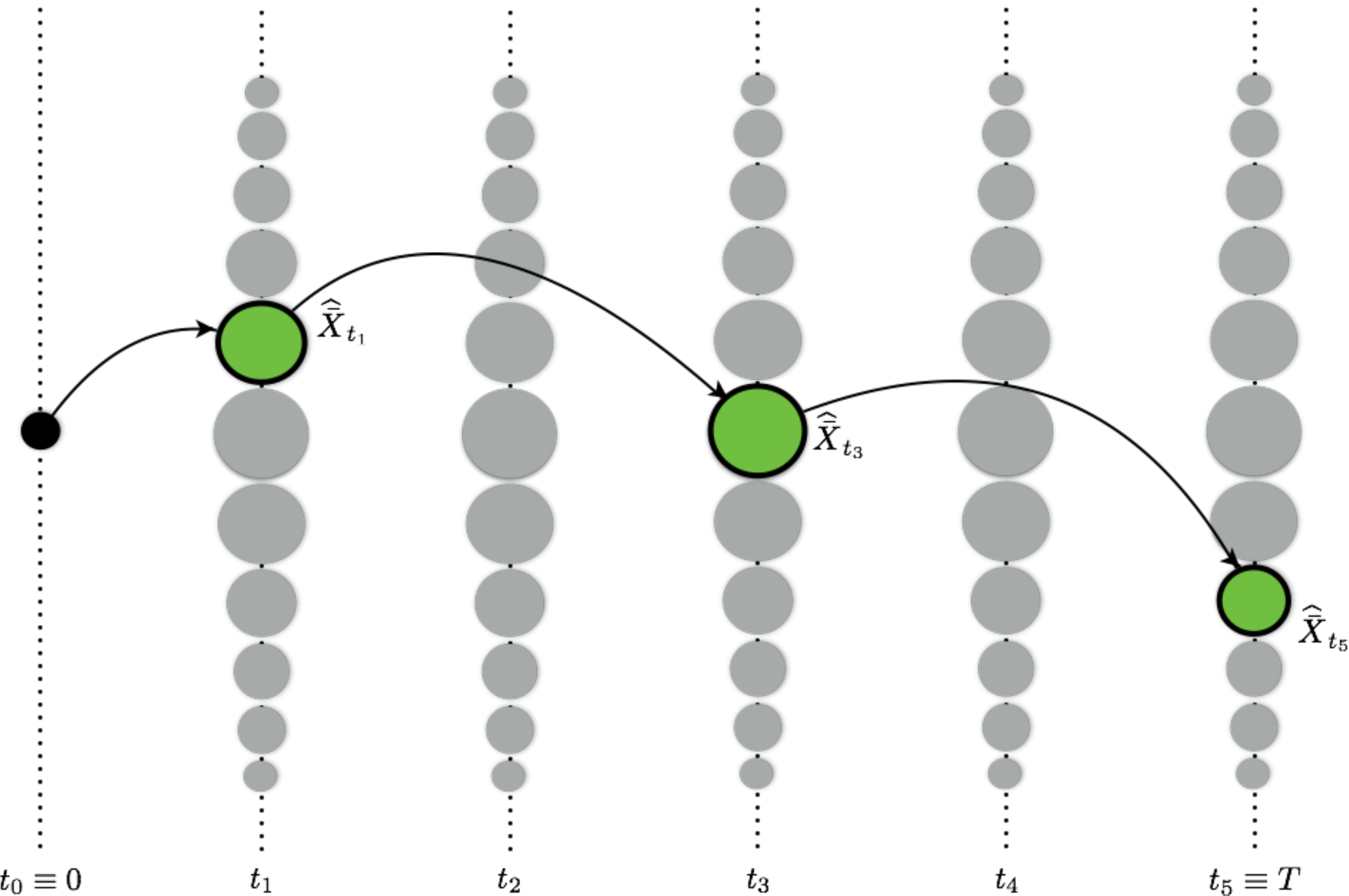}
  \caption{Example of one Monte Carlo path sampled with the \LTSA with $u_1=t_1$, $u_2=t_3$, and $u_3=t_5$.}
  \label{fig:LTSA}
\end{figure}
This major difference between \RMQA and \LTSA becomes more evident looking at Figures~\ref{fig:RMQA_transition} and~\ref{fig:RMQA_marginal}, where we plot, respectively, a Monte Carlo simulation connecting the initial point $x_0$ with a random final point $\widehat{\bar{X}}_{t_5}$, and a direct jump to date simulation to random points $\widehat{\bar{X}}_{t_k}$ with $k=1,\ldots,5$, respectively.
\begin{figure}[!ht]
  \centering
  \includegraphics[scale=0.3]{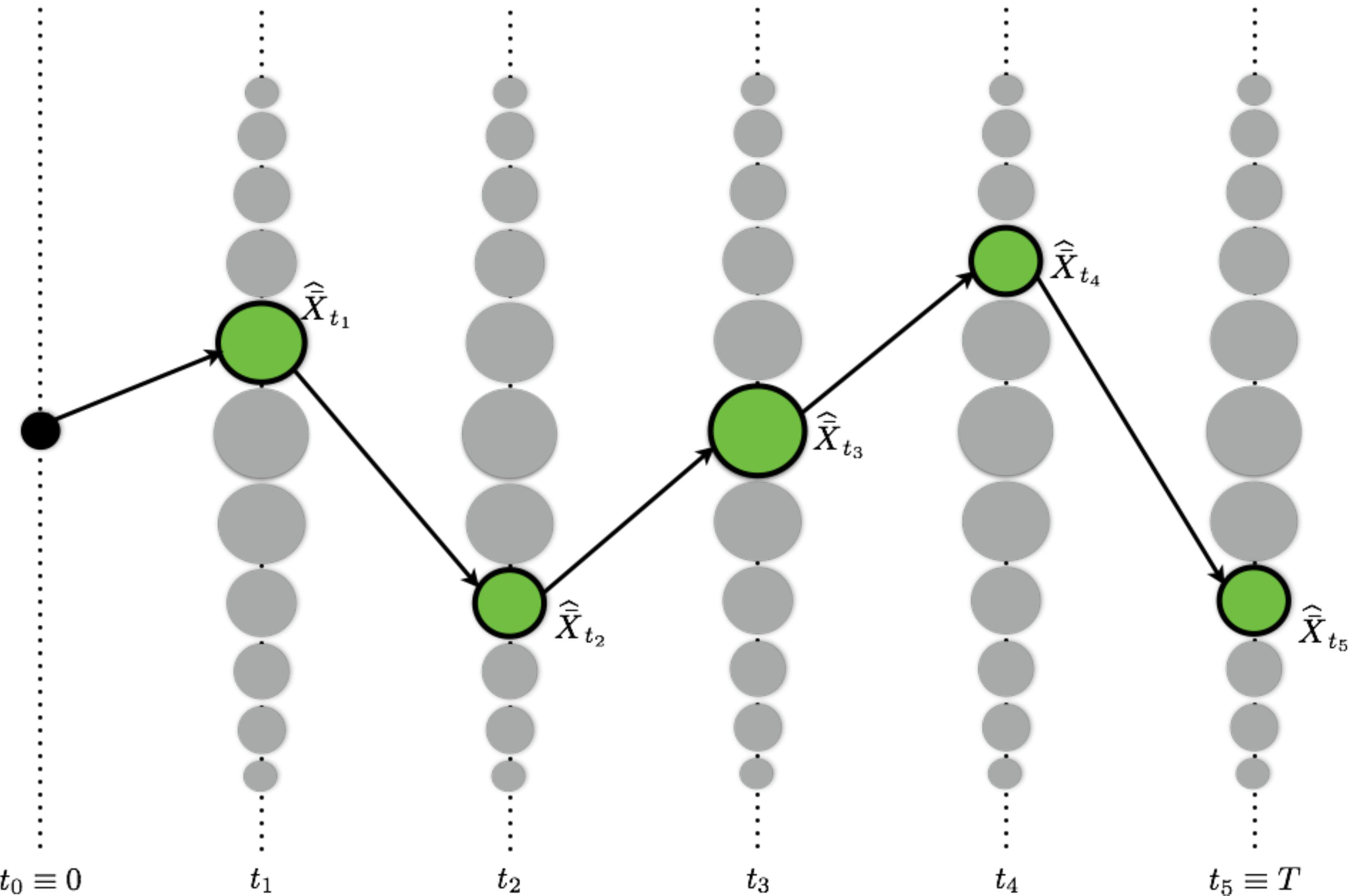}
  \caption{Example of one Monte Carlo path sampled with \RMQA over a time-grid computed with six time buckets.}
  \label{fig:RMQA_transition}
\end{figure}
\begin{figure}[!ht]
  \centering
  \includegraphics[scale=0.3]{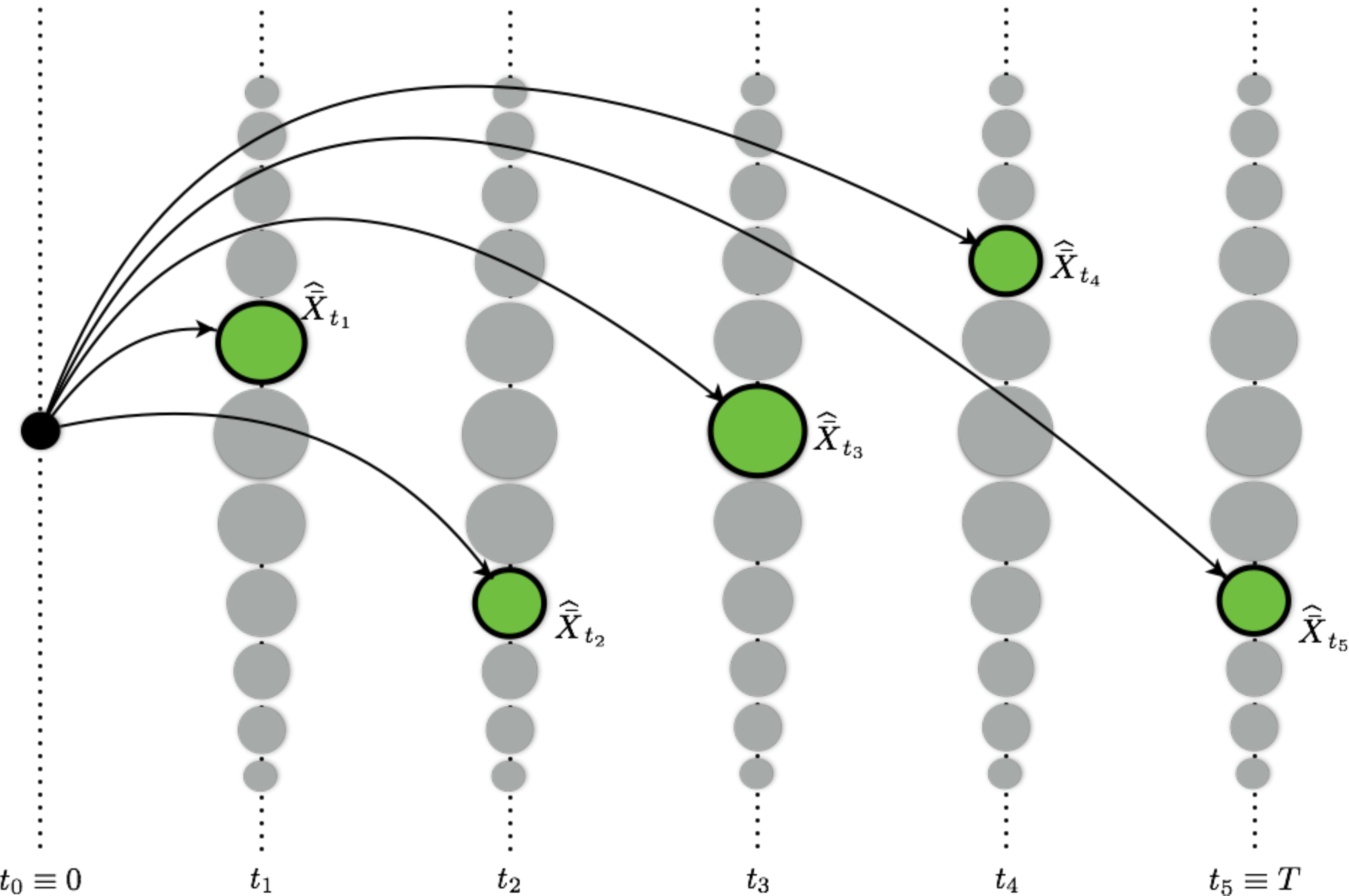}
  \caption{Example of direct transitions from the initial point $x_0$ to random points $\widehat{\bar{X}}_{t_k}$ with $k=1,\ldots,5$ computed with the \RMQA.}
  \label{fig:RMQA_marginal}
\end{figure}

We underline that, both the \RMQA and the \LTSA enable the computation of the price of vanilla options by means of a straightforward scalar product. Indeed, the price of a vanilla call option with strike $K$ and maturity $T = t_{m}$ can be computed as follows
\begin{equation*}
  \sum_{i=1}^{N}\mathbb{P}(\gamma_{i}^{m}| x_0)(\gamma_{i}^{m}-K)^{+}\,.
\end{equation*}

\subsubsection{LTSA implementation: more technical details}
In order to compute effectively and accurately a matrix exponential and so recover the transition probability for the \LTSA, we use the so called Scaling and Squaring method along with Pad\`e approximation. In particular, we implement the version of the method proposed by Higham in \cite{higham2005scaling} because it outperforms, both in efficiency and accuracy, previous implementations proposed in~\cite{sidje1998expokit} and \cite{ward1977numerical}. We now give a brief drawing of the algorithm implemented in \cite{higham2005scaling}, outlining its practical features but without giving all the mathematical details behind it. For a more extensive analysis on the method we refer to the original paper. Besides, we refer to~\cite{baker1996pade} for an extensive description of Pad\`e approximation. 

The Scaling and Squaring algorithm exploits the following trivial property of the exponential function:
\begin{equation}\label{eq:property}
  \re^{\widetilde{\mathcal{L}}_{\Gamma}}=\left(\re^{\frac{\widetilde{\mathcal{L}}_{\Gamma}}{\beta}}\right)^{\beta}\,,
\end{equation}
where $(t_{\tilde{k}}-t_{k})\mathcal{L}_{\Gamma} \doteq \widetilde{\mathcal{L}}_{\Gamma}$, together with the fact that $\re^{\widetilde{\mathcal{L}}_{\Gamma}}$ is well approximated by a Pad\`e approximation near the origin, that is for small $\|\widetilde{\mathcal{L}}_{\Gamma}\|$, where $\|\cdot\|$ is any subordinate matrix norm. In particular, Pad\`e approximation estimates $\re^{\widetilde{\mathcal{L}}_{\Gamma}}$ with the ratio between two (matrix) polynomials of degree $13$. The mathematical elegance of the Pad\`e approximation is enhanced by the fact that the two approximating polynomial are known explicitly. 

\noindent The main hint of the Scaling and Squaring method is to choose $\beta$ in Equation \eqref{eq:property} as an integer power of 2, $\beta=2^{n}$ say, so that $\widetilde{\mathcal{L}}_{\Gamma}/\beta$ has norm of order 1, to approximate $\re^{\widetilde{\mathcal{L}}_{\Gamma}/\beta}$ by a Pad\`e approximation, and then to compute $\re^{\widetilde{\mathcal{L}}_{\Gamma}}$ by repeating squaring $n$ times. In particular, we define $\delta t \doteq (t_{k'}-t_{k})/2^{n}$. 
\noindent We use Pad\`e approximation because of the usage of the explicit Euler Scheme for the discretisation of $\mathcal{L}$ (see Appendix \ref{sec:appendix_4}). Indeed, one needs to impose the so called Courant condition for the matrix $\delta t \mathcal{L}_{\Gamma}$. The Courant condition requires that $\|\delta \mathcal{L}_{\Gamma}\|_{\infty}<1$. This translates into the following stringent condition for $\delta t$
\begin{equation*}
  \delta t < \frac{1}{2}\frac{(\Delta \gamma)^2}{\sigma(\gamma_{i})},\,\, \forall\,1\le i \le N\,.
\end{equation*}
The usage of the Pad\`e approximation permits to relax the last constraint. In particular, the implementation in~\cite{higham2005scaling} allows $\|\delta \mathcal{L}_{\Gamma}\|_{\infty}$ to be much larger.

\section{Financial applications}\label{sec:Numerical}
In this final section we present and discuss how results achieved in the previous Sections~\ref{sec:Monte_Carlo} and~\ref{sec:transition} can be applied to finance, and in particular to option pricing in the FX market, where spot and forward contracts, along with vanilla and exotic options are traded (see, for instance ~\cite{reiswich2012fx} for a broad overview on FX market). In particular, we consider the following types of path-dependent options: (i) Asian calls, (ii) up-and-out barrier calls, (iii) automatic callable (or auto-callabe).
We choose two different models for the underlying EUR/USD FX rate: a LV model as a benchmark  and the Constant Elasticity of Variance model (henceforth CEV)~\cite{cox1975notes}, coming from the academic literature.

\subsection{Model and payoff specifications}\label{sub:Specs}

Let us first introduce some notations relative to the LV model and to the CEV model. Recall that we have assumed in Section \ref{sec:Monte_Carlo} deterministic interest rates. Moreover, we indicate by $X_{t}$ the spot price at time $t$ of one EUR expressed in USD and by $X_{t}(T)$ the corresponding forward price at time $t$ for delivery at time $T$. 
\noindent We introduce the so-called normalized spot exchange rate $x_{t}^{LV}\doteq X_{t}/X_{0}(t)$, all $t\in [0,T]$ (where the superscript ``LV'' clearly stands for Local Volatility)
 and we suppose that the process $x^{LV} \doteq (x_{t}^{LV})_{t\in [0,T]}$ follows the SDE
\begin{equation*}
  \begin{cases}
    \rd x_{t}^{LV} = x_{t}^{LV} \eta(t,x_{t}^{LV}) \,\rd W_{t}\,,\\
    x_{0}^{LV}=1\,.
  \end{cases}
\end{equation*}%
Hence, in this LV model, $x^{LV}$ corresponds to the underlying process $X$ introduced in Section \ref{sec:Monte_Carlo}, and besides making reference to Equation \eqref{eq:dynamic} we have  $b(t,x_{t}^{LV})=0$ and $\sigma(t, x_{t}^{LV})=\eta(t,x_{t}^{LV})x_{t}^{LV}$, where $\eta:[0,T]\times \mathbb{R}\rightarrow \mathbb{R}_{+}$ corresponds to the local volatility function. Specifically, it is a cubic monotone spline for fixed $t\in [0,T]$ (see~\cite{fritsch1980monotone} for an overview on interpolation technique) with flat extrapolation. The set of points to be interpolated is determined numerically during the calibration procedure\footnote{We use the calibration procedure proposed in~\cite{reghai2012local} and refined in~\cite{pallavicini2015}. This procedure is particularly robust. Indeed, the resulting local volatility surface is ensured to be a smooth function of the spot. The data set is available upon requests.}. In particular, this procedure leads to a piecewise time-homogeneous dynamics for the process $x^{LV}$.

\noindent As a second example we consider the CEV, i.e., we assume that the asset price process $X$ follows a CEV dynamics
\begin{equation*}
  \begin{cases}
    \rd X_{t}=r X_{t} \,\rd t+\sigma X_{t}^{\alpha} \,\rd W_{t} \,,\\
    X_{0}=x_{0}\in \mathbb{R}_{+}\,,
  \end{cases}
\end{equation*}%
where $r\in \mathbb{R}_{+}$ is the risk-free interest rate, $\sigma \in \mathbb{R}_{+}$ is the volatility, and $\alpha > 0$ is a constant parameter.\\

Then, given a time discretisation grid $\{ 0 =t_0, t_1, \dots, t_n=T \}$ on $[0,T]$ as in Section  \ref{sec:Monte_Carlo} and making reference to the Euler-Maruyama scheme in Equation \eqref{eq:Euler_Maruyama} we consider the unidimensional payoff specifications below. In particular, we compute the price at time $t_{0} = 0$.
\begin{itemize}
\item[i)] \textbf{Asian calls.} The discounted payoff function of a discretely monitored Asian call option is
\begin{equation*}\label{eq:Asian_option}
  \begin{split}
    \payoff_\text{A}(\bar{X}_{0},\dots,\bar{X}_{t_{n}})& \doteq e^{-r(t_{n}-t_{0})}\max\left(\frac{1}{n+1}\sum_{i=0}^{n}\bar{X}_{t_{i}}-K,0\right),
  \end{split}
\end{equation*}
where $K$ is the strike price and $T>0$ the maturity. 
\item[ii)] \textbf{Up-and-out barrier calls.} We consider barrier options of European style.  The discounted payoff at maturity $T>0$ of an up-and-out barrier call is given by:
\begin{equation}\label{eq:Barrier}
\re^{-r(t_{n}-t_{0})}\max(X_{T}-K,0)\indicator_{\{\tau > T\}}
\end{equation}%
where $K$ is the strike price, $\tau\doteq  \inf\{t \ge 0: X_{t}\ge B\}$ and $B$ is the upper barrier. It is known -- see for instance \cite{lapeyre2003understanding} -- that, whenever we discretise the continuous time discounted payoff in Equation \eqref{eq:Barrier} by defining 
\begin{equation*}
  \payoff_\text{B}(\bar{X}_{t_{0}},\dots,\bar{X}_{t_{n}}) \doteq \re^{-r(t_{n}-t_{0})}\max(\bar{X}_{t_{n}}-K,0)\prod_{k=0}^{n}\indicator_{\{\bar{X}_{t_k}<B\}}\,,
\end{equation*}
we overestimate the price of the option. Actually, we do not take into account the possibility that the asset price could have crossed the barrier for some $t \in (t_{k},t_{k+1})$, $0\le k \le n-1$. In \cite{glasserman2003monte,lapeyre2003understanding} the authors propose a strategy to obtain a better approximation of the price of the option in Equation \eqref{eq:Barrier} when employing MC simulation. It consists in checking, at each time step $t_{k}=k\Delta t$, $0\le k \le n-1$, and for all the MC paths $l$, $1\le l \le N_{MC}$,  whether the simulated path $\bar{X}_{t_{k}}^{(l)}$ has reached the barrier $B$ or not. So, first one computes the probability
\begin{equation*}
  p_{k}^{(l)}\doteq 1-\exp\left[-\frac{2}{\sigma_{k}\Delta t}(B-\bar{X}_{t_{k}}^{(l)})(B-\bar{X}_{t_{k+1}}^{(l)})\right],
\end{equation*}
with $\sigma_{k}$ the diffusive coefficient of the underlying asset price in $(t_k,t_{k+1})$, then one simulates a random variable from a Bernoulli distribution with parameter $1-p_{k}^{(l)}$: if the outcome is favourable the barrier has been reached in the interval $(t_{k},t_{k+1})$ and the price associated to the $l$-th path is zero. Otherwise, the simulation is carried on to the step further. Consistently, the adjusted discounted payoff for a discretely monitored up-and-out call barrier option reads:
  \begin{equation*}
    \re^{-r(t_{n}-t_{0})}\max(\bar{X}_{t_{n}}-K,0)\prod_{k=0}^{n-1}p_{k}\,.
  \end{equation*}
\item[iii)] \textbf{Automatic callable (or auto-callable)\footnote{They were first issued in the U.S. by BNP Paribas in August 2003 as cited for instance in ~\cite{deng2011modeling}.}.} The discounted payoff of an auto-callable option with unitary notional is given by 
\begin{equation*}
  \begin{split}
    &\begin{cases}
      \re^{-r(t_{i}^{c}-t_{0})}Q_{i}\qquad\text{\,\,if } \bar{X}_{t_{j}^{c}}< X_{0} \mathrm{b} \le \bar{X}_{t_{i}^{c}} \qquad\text{for all } j<i\,,\\
      \re^{-r(t_{n}-t_{0})}\frac{X_{t_{n}}}{X_{0}}\qquad\text{if } \bar{X}_{t_{i}^{c}}<X_{0} \mathrm{b} \qquad\text{\,\,\,\,\,\,\,\,\,\,\,\,\,\,\,\,\,for all }i=1,\dots,m\,,\\
    \end{cases}
  \end{split}
\end{equation*}
where $\{t_1^{c}, \dots, t_{m}^{c}\}$ is a set of pre-fixed call dates, $\mathrm{b}>X_{0}$ is a pre-fixed barrier level, and $\{Q_{1}, \dots, Q_{m}\}$ is a set of pre-fixed coupons.  The set of call dates $\left\{t_{1}^{c}, \dots, t_{m}^{c}\right\}$ does not coincide with the set of times of the Euler scheme discretisation $\left\{t_{0}, \dots, t_{n}\right\}$. In particular, the latter has finer time resolution grid.
\end{itemize}
\medskip
We show in Section \ref{sub:numerical_discussion} that all previous payoffs can be priced efficiently by using our novel algorithm, i.e., by reverting the Monte Carlo paths and simulating them from maturity back to the initial date.

\subsection{Numerical results and discussion}\label{sub:numerical_discussion}
Let us introduce some terminology that we will use in the summary Tables of our numerical results. In particular, we will termed: (i) \textit{Euler Scheme} prices obtained via a Monte Carlo procedure on the process $\bar{X}$, (ii) \textit{Forward} prices obtained via a forward Monte Carlo procedure on $\widehat{\bar{X}}$ from the starting date to the maturity, (iii) $\textit{Backward}$ prices computed through the Backward Monte Carlo algorithm, and finally, (iv) the \textit{Benchmark} price is an \textit{Euler Scheme} price (in case of Asian call and up-and-out barrier call options) or a \textit{Forward} price (in case of auto-callable option) whose estimation error is negligible respect to the significant digits reported. Besides, in brackets we will report the numerical estimation error corresponding to one standard deviation.\\
Let us now stress some aspects related to the implementation of the backward Monte Carlo algorithm along with the procedures described in Sections~\ref{subsec:rmq_algorithm} and \ref{subsec:fast_exponentiation}.\\
In order to have a meaningful comparison between \textit{Euler Scheme} and \textit{Backward} prices and between \textit{Forward} and \textit{Backward} prices, for each of the $N^{+}$ points $\tilde{\gamma}_{i_{n}}^{n} \in \widetilde{\Gamma}_{n}$ we generate $N_{MC}^{i_{n}}$ random paths in such a way that $N_{MC}^{i_{n}} \times N^{+} = N_{MC}$, where $N_{MC}$ indicates the number of simulations employed to compute either \textit{Euler Scheme} or \textit{Forward} prices. The choice of the final domain of integration $\widetilde{\Gamma}_{n}$ depends on the payoff specification. In particular, $\widetilde{\Gamma}_{n} \doteq \{\gamma_{i}^{n} \in \Gamma_{n}:\text{ }K \le \gamma_{i}^{n} \le B \}$ when pricing up-and-out call barrier options and $\widetilde{\Gamma}_{n} = \Gamma_{n}$ when pricing In-The-Money (ITM), At-The-Money (ATM), Out-The-Money (OTM) Asian call options and auto-callable options.
Concerning the granularity of the state-space discretisation we fix the cardinality of the quantizers $\Gamma_{k}$, $1\le k \le n$, to 100. With this value the error on vanilla call option  prices, computed as
\begin{equation}\label{eq:error_2}
  |\sigma^{mkt}-\sigma^{alg}|
\end{equation}
is less than or equal to five basis point (recall that $1$ bp $=10^{-4}$), where in Equation~\eqref{eq:error_2}, $\sigma^{mkt}$ is the market implied volatility, whereas $\sigma^{alg}$ is the implied volatility computed by the backward Monte Carlo algorithm.\\
\noindent The stopping criteria for the \RMQA corresponds to $\|\Gamma_{k}^{l+1}-\Gamma_{k}^{l}\|\le 10^{-5}$, $1\le k \le n$, where $\Gamma_{k}^{l}$ is the quantizer computed by the algorithm at time $t_{k}\in \{t_{1},\dots,t_{n}\}$ at the $l$-th iteration. Moreover, in the Backward Monte Carlo algorithm case, for each point in $\widetilde{\Gamma}_{n}$ we generate $N_{MC}^{i_{n}} = N_{MC} \div \vert N^{+} \vert = 10^{4}\div \vert \widetilde{\Gamma}_{n} \vert$ random paths. Let us now come to the discussion of the numerical results.\\

\medskip

\noindent In Table \ref{tab:Barrier}  we report up-and-out barrier call option prices for both LV and CEV, as well as their relative estimation errors.  In order to test the performances of our algorithm we price ITM, ATM and OTM options. In particular, for both models the initial spot price is $\bar{X}_0 = 1.36$. This value corresponds to the value of the EUR/USD exchange rate at pricing date (23-June-2014). Besides, for both dynamics the value of the pair strike-barrier, $(K, B)$, is set to $(1.35, 1.39)$, $(1.36, 1.39)$ and to $(1.37, 1.39)$ for ITM, ATM and OTM up-and-out call barrier options respectively. The maturity $T$ is $6$ months and the number of Euler steps is $n = 51$. For CEV model we fix $\alpha = 0.5$ and $r = 0.32 \%$ (the latter corresponds to the value of the 6 months domestic interest rates implied by the forward USD curve at pricing date). Instead, as regards the parameter $\sigma$ we vary it from $\sigma = 5 \%$ to $\sigma = 20 \%$ with steps $\Delta \sigma = 5\%$. \\
\noindent Panel A of Table \ref{tab:Barrier} compares the efficiency of the Euler Scheme Monte Carlo with that of the Backward Monte Carlo for the Local Volatility dynamics. Panel B, instead, compares the efficiency of the two algorithms for the CEV dynamics. 
For both models the Backward Monte Carlo algorithm exhibits better performances than the Euler Scheme Monte Carlo. 
More precisely, for LV the ratio between the estimation error of the Euler Scheme MC and that of the Backward MC, henceforth $Error$ $ratio$, is $2.2$, $2.5$ and $3.1$ for ITM, ATM and OTM options respectively. As regards the CEV model, Figure \ref{fig:barrier} summarizes the results. In particular, gain in efficiency is more evident if we increase the value of the parameter $\sigma$. Intuitively, this happens because the probability for the price paths to hit the barrier $B$ over the life of the option increases with the increasing of $\sigma$. Moreover, for a fixed value of  $\sigma$ gain in efficiency is more evident when pricing OTM options. This happens because for OTM options a relevant number of forward paths do not contribute to the payoff and, in order to increase the pricing accuracy of the Euler Scheme MC, it would be necessary to force paths to sample the region in which the payoff is different from zero, namely between the strike $K$ and the barrier $B$.

\begin{table}[!h]\centering
  \ra{1}
\begin{tabular}{@{}lccc@{}}\toprule
  \multicolumn{4}{c}{\textbf{Up-and-out barrier call}}\\
  \cmidrule{1-4}
   \textbf{Algorithm}  & \textbf{ITM} & \textbf{ATM} & \textbf{OTM}  \\
  \cmidrule{1-4}
  \textbf{Panel A} &\multicolumn{3}{c}{\textbf{Local Volatility model}}\\
  \cmidrule{1-4}
 \textit{Euler Scheme}	 &  1.063E-3 (3.8E-5)		&	4.54E-4 (2.2E-5)	&  1.41E-4( 1E-5)\\
 	\textit{Backward} 	 &  1.064E-3 (1.7E-5)		& 	4.67E-4 (  9E-6)	&  1.45E-4( 3E-5)\\
	\textit{Benchmark}  	 &  1.055E-3   		            &   4.58E-4				&  1.44E-4			\\  
  \cmidrule{1-4}
\textbf{Panel B} &\multicolumn{3}{c}{\textbf{CEV model}}\\
  \cmidrule{1-4}
  \textbf{$\sigma = 5\%$}         &			&		&		\\
  \cmidrule{1-4}
  	\textit{Euler Scheme}	 & 2.431E-3 (6.7E-5)		&	1.133E-3 (4.1E-5)	&  3.11E-4 (1.7E-5)			\\
 	\textit{Backward} 	     & 2.500E-3 (3.1E-5)		& 	1.116E-3 (1.6E-5)	&  3.49E-4 (6E-6)			\\
	\textit{Benchmark}  	    	 & 2.501E-3	   		    &   1.073E-3			    &  3.49E-4					\\  
  \cmidrule{1-4}
  \textbf{$\sigma = 10\%$}      		 &						&						&					\\
  \cmidrule{1-4}
  \textit{Euler Scheme}		&  4.53E-4 (3.0E-5)		&	1.86E-4 (1.8E-5)	&  5.31E-5 (7.5E-6) 		\\
  \textit{Backward}        	&  3.94E-4 (1.0E-5)		&   1.69E-4 (  5E-6)	&  5.39E-5 (1.8E-6)		\\
  \textit{Benchmark}    		&  4.03E-4  		        & 	1.70E-4			&  5.49E-5					\\  
  \cmidrule{1-4}
  \textbf{$\sigma = 15\%$}       		&						&		&		\\
  \cmidrule{1-4}
  \textit{Euler Scheme}		&  1.32E-4 (1.6E-5)		& 	5.59E-5 (9.1E-6)	&  1.48E-5 (4.0E-6)		\\
  \textit{Backward}      	&  1.28E-4 (5  E-6)		&   5.57E-5 (2.3E-6)	&  1.64E-5 (  8E-7)    	\\
  \textit{Benchmark}    		&  1.19E-4  				&   5.56E-5			&  1.64E-5 				\\  
  \cmidrule{1-4}
  \textbf{$\sigma = 20\%$}       		&							&		&		\\
  \cmidrule{1-4}
  \textit{Euler Scheme}   	 & 4.85E-5 (9.3E-6)		& 2.91E-5 (6.8E-6)    &  6.2E-7 ( 4.4E-7)	\\
  \textit{Backward}     		 & 5.80E-5 (2.6E-6)		& 2.53E-5 (1.2E-6)    &  8.1E-7 ( 5E-8)							\\
  \textit{Benchmark}   		 & 5.52E-5  				& 2.43E-5 			 &   7.5E-7					\\  
  \bottomrule
\end{tabular}
\caption{Numerical values for \textit{Euler Scheme} and \textit{Backward} prices for an up-and-out barrier call option for both LV and CEV model. \textit{Errors} correspond to one standard deviation. The initial spot price is $\bar{X}_{0} = 1.36$, whereas the pair strike-barrier is set to $(1.35, 1.39)$, $(1.36, 1.39)$, $(1.37, 1.39)$ for ITM, ATM and OTM options respectively.
}
\label{tab:Barrier}
\end{table}
\begin{figure}[!h]
\begin{center}
 \includegraphics[scale=0.75]{./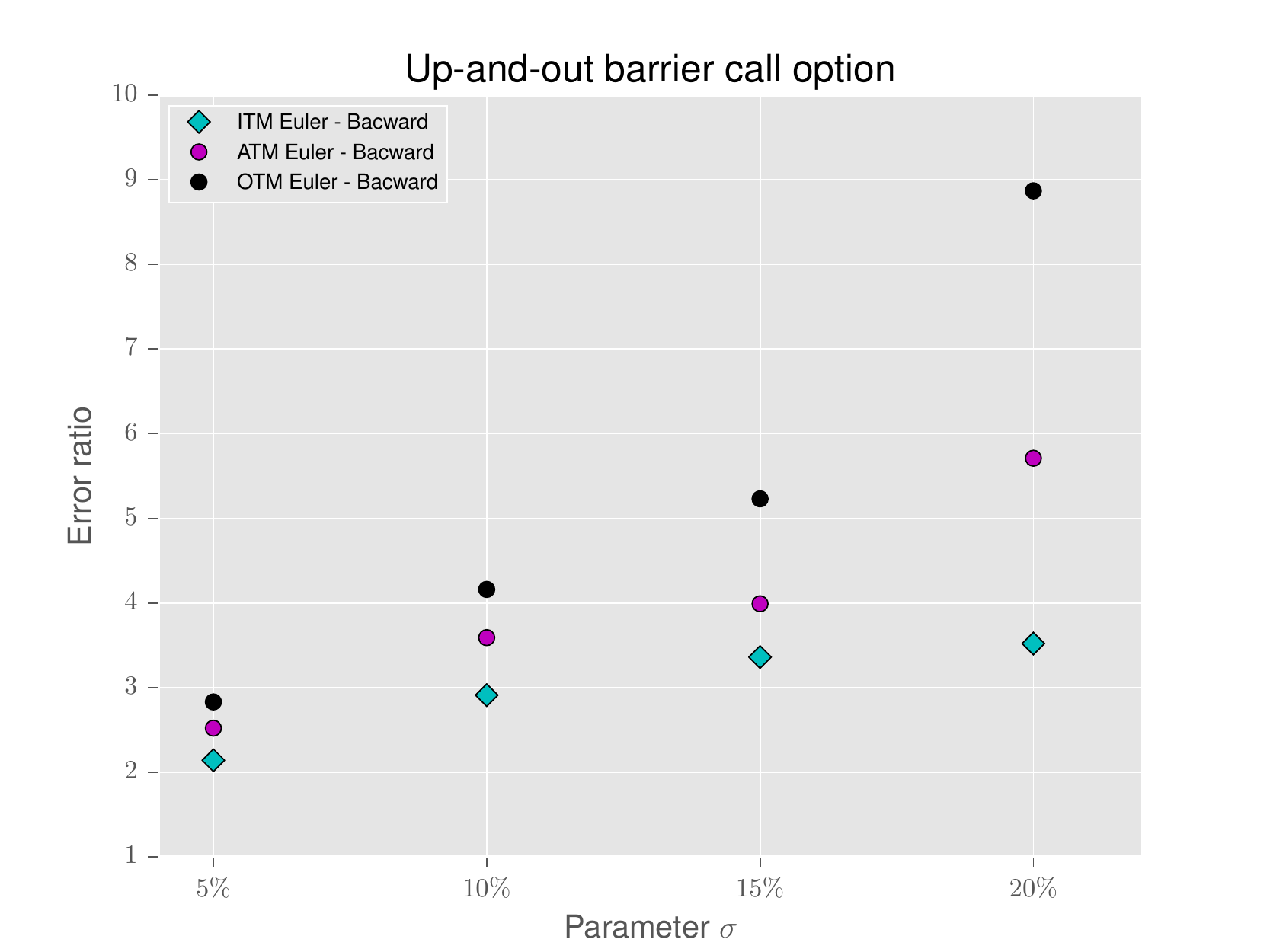} 
 \caption{Plot of the $Error$ $ratio$ as a function of the parameter $\sigma$ for CEV model when pricing up-and-out barrier call option. The initial spot price is $\bar{X}_{0} = 1.36$, whereas the pair strike-barrier is set to $(1.35, 1.39)$, $(1.36, 1.39)$, $(1.37, 1.39)$ for ITM, ATM and OTM options respectively.}
\label{fig:barrier} 
 \end{center}
\end{figure}

\medskip
In Table \ref{tab:Asian} we compare performances of the Euler Scheme Monte Carlo with those of the Backward Monte Carlo when pricing Asian call options, for both LV and CEV model.  We set $\bar{X}_{0} = 1.36$. As done for up-and-out barrier call we test the efficiency of our novel algorithm when pricing ITM ($K = 1.35$), ATM ($K=1.36$) and OTM ($K=1.37$) options. The maturity $T$ is 6 months and $n=51$. Also in this case, for CEV model we fix the value of the risk-free rate $r = 0.32\%$ and of $\alpha = 0.5$, and we vary the value of $\sigma$ from $5\%$ to $20\%$ with steps $\Delta \sigma = 5\%$.
Results for the LV dynamics are reported in Panel A of Table \ref{tab:Asian}, whereas Panel B reports the results for the CEV. Table \ref{tab:Asian} suggests that the strategy of reverting the MC paths and simulating them from maturity back to starting date is an effective alternative to Euler MC also for Asian call options. In this case the improvement in efficiency derives from the fact that with Backward MC we decide the number of paths to sample from each of the final points in $\Gamma_{n}$, sampling efficiently also those regions that are infrequently explored by the price process because of its diffusive behaviour.
Figure \ref{fig:asian} suggests that the importance of this feature is more evident when pricing OTM options. Besides, the $Error$ $ratio$ is almost constant across the value of $\sigma$ for a fixed scenario (ITM, ATM or OTM).

\begin{table}[!h]\centering
  \ra{1}
\begin{tabular}{@{}lccc@{}}\toprule
  \multicolumn{4}{c}{\textbf{Asian call}}\\
  \cmidrule{1-4}
   \textbf{Algorithm}  & \textbf{ITM} & \textbf{ATM} & \textbf{OTM}  \\
  \cmidrule{1-4}
  \textbf{Panel A} &\multicolumn{3}{c}{\textbf{Local Volatility model}}\\
  \cmidrule{1-4}
    \textit{Euler Scheme}& 0.013628 (0.000161)		&	0.009444 (0.000142)	&  0.006194 (0.000142)		\\
 	\textit{Backward}    & 0.013582 (0.000107)		& 	0.009384 (0.000092)	&  0.006124 (0.000092)		\\
	\textit{Benchmark}   & 0.013590	   		    		&   0.009398				&  0.006170						\\  
  \cmidrule{1-4}
\textbf{Panel B} &\multicolumn{3}{c}{\textbf{CEV model}}\\
  \cmidrule{1-4}
  \textbf{$\sigma = 5\%$}         &			&		&		\\
  \cmidrule{1-4}
  	\textit{Euler Scheme}	 & 0.015964 (0.000174)		&	0.009851 (0.000143)	&  0.005826 (0.000109)		\\
 	\textit{Backward} 	     & 0.015904 (0.000105)		& 	0.010014 (0.000085)	&  0.005565 (0.000065)		\\
	\textit{Benchmark}  	    	 & 0.015989	   		        &   0.010038				&  0.005634				\\  
  \cmidrule{1-4}
  \textbf{$\sigma = 10\%$}      		 &				    &						&						\\
  \cmidrule{1-4}
  \textit{Euler Scheme}		&  0.024682 (0.000321)	&	    0.019681 (0.000288)	&  0.014780 (0.000251) 				\\
  \textit{Backward}        	&  0.024709 (0.000190)	 &      0.019164 (0.000171)	&  0.015021 (0.000150)				\\
  \textit{Benchmark}    		&  0.024927  		         & 	0.019448				&  0.014921						\\  
  \cmidrule{1-4}
  \textbf{$\sigma = 15\%$}  &						&		&		\\
  \cmidrule{1-4}
  \textit{Euler Scheme}		&  0.033764  (0.00056)	& 0.028335 (0.000424)	&  0.024203 (0.000391)		\\
  \textit{Backward}      	&  0.034160  (0.00033)	& 0.028896 (0.000232)	&  0.024132 (0.000236)    	\\
  \textit{Benchmark}    		&  0.033991  			& 0.028867			    &  0.024011					\\  
  \cmidrule{1-4}
  \textbf{$\sigma = 20\%$}       		&							&		&		\\
  \cmidrule{1-4}
  \textit{Euler Scheme}   	 & 0.043553 (0.000604)     & 0.037880 (0.000554) &	0.033989  (0.000542)		\\
  \textit{Backward}     		 & 0.043117 (0.000363)	  & 0.037653  (0.000341) &  	0.033234  (0.000317)		\\
  \textit{Benchmark}   		 & 0.043276  			&   0.033366 			& 	0.033367					\\  
  \bottomrule
\end{tabular}
\caption{Numerical values for \textit{Euler Scheme} and \textit{Backward} prices for an Asian call option for both LV and CEV model. \textit{Errors} correspond to one standard deviation.
}
\label{tab:Asian}
\end{table}

\begin{figure}[!h]
\begin{center}
 \includegraphics[scale=0.75]{./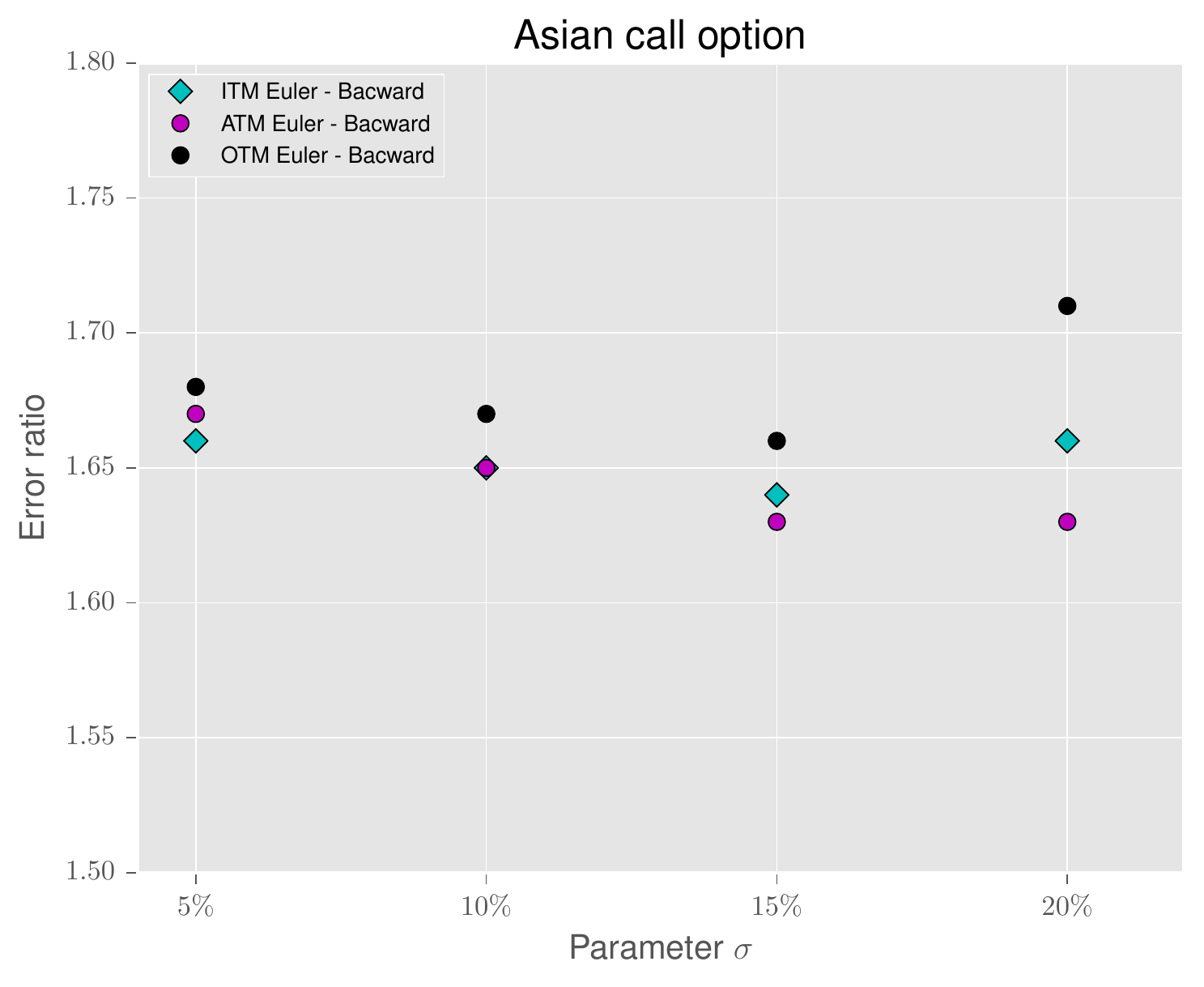} 
 \caption{Plot of the $Error$ $ratio$ as a function of the parameter $\sigma$ for CEV model when pricing Asian call option. The initial spot price is $\bar{X}_{0} = 1.36$, whereas the strike is set to 1.35, 1.36, 1.37 for ITM, ATM and OTM options respectively.}
\label{fig:asian} 
 \end{center}
\end{figure}

\medskip
In Table \ref{tab:Autocallable} we report prices of an auto-callable option for LV. In this case, we compare the Backward MC with the Forward MC. The Euler Scheme MC is ineffective for payoffs specifications which depend on the observation of the underlying on a pre-specified set of dates (such as auto-callable and European). The multinomial tree and the transition probability matrices are recovered by means of the \LTSA. In order to compare the two MC methodologies we fix a set of call-dates  $\{t_{1}^{c}, \dots, t_{4}^{c}\}$ and a set of pre-fixed coupon $\{Q_{1}, \dots, Q_{4}\}$ and we vary the value of the barrier b. Precisely, $\{t_{1}, \dots, t_{4}\} =\{1, 3, 6, 12 \}$ months, $\{Q_{1}, \dots, Q_{4}\} = \{5\%, 10\%, 15\%, 20\%\}$ with unitary notional, and $\text{b} \in \{\bar{X}_{0}, 1.05\bar{X}_{0}, 1.1\bar{X}_{0}\}$. As usual, at pricing date the EUR/USD exchange rate is $\bar{X}_{0}=1.36$.
Results in Table \ref{tab:Autocallable} show that the improvement on efficiency linked to the simulation of the paths backward from maturity to starting date increases with the increase of the value of the barrier $\text{b}$ (in real market usually one has $\text{b}>\bar{X}_{0}$). In particular, the ratio between the estimation error of the Forward MC and that of the Backward MC is $\approx 0.8$, $\approx 2 $ and $\approx 5.5 $ for $\text{b} = \bar{X}_{0}$, $\text{b} = 1.05 \bar{X}_{0}$ and $\text{b} = 1.1\bar{X}_{0}$ respectively. Intuitively, this happens because an increase in the value of the barrier $\text{b}$ makes the early exercise of the option less probable. In particular, a larger number of paths will reach the final domain of integration. So, thanks to our method we can sample with a predetermined number of paths regions in this domain that are infrequently explored by the price process.

\begin{table}[!h]\centering
  \ra{1}
\begin{tabular}{@{}lccc@{}}\toprule
  \multicolumn{4}{c}{\textbf{Auto-callable options}}\\
  \cmidrule{1-4}
   \textbf{Barrier b}  & \textbf{$\text{b}=\bar{X}_{0}$} & \textbf{$\text{b}=1.05\bar{X}_{0}$} & \textbf{$\text{b}=1.1\bar{X}_{0}$}  \\
  \cmidrule{1-4}
  \textbf{\textbf{Algorithm} } &\multicolumn{3}{c}{\textbf{Local Volatility model}}\\
  \cmidrule{1-4}
\textit{Forward}	 	& 0.04107 (0.00056)		&	0.01902 (0.00074)	&  0.00447 (0.00058)				\\
\textit{Backward} 	& 0.04099 (0.00072)		& 	0.01856 (0.00039)	&  0.00377 (0.00011)				\\
\textit{Benchmark}  	& 0.04099	   		    &   0.01820				&  0.00357						\\  
  \bottomrule
\end{tabular}
\caption{Numerical values for \textit{Forward} and \textit{Backward} prices for an auto-callable option for LV model. \textit{Errors} correspond to one standard deviation.
}
\label{tab:Autocallable}
\end{table}

\section{Conclusion}\label{sec:conclusion}

In this paper, we present a novel approach -- termed backward Monte Carlo -- to the Monte Carlo simulation of continuous time diffusion processes. We exploit recent advances in the quantization of diffusion processes to approximate the continuous process with a discrete-time Markov Chain defined on a finite grid of points. Specifically, we consider the Recursive Marginal Quantization Algorithm and as a first contribution we investigate a fixed-point scheme -- termed Lloyd I method with Anderson acceleration -- to compute the optimal grid in a robust way. As a complementary approach, we consider the grid associated with the explicit scheme approximation of the Markov generator of a piecewise constant volatility process. The latter approach -- termed Large Time Step Algorithm -- turns out to be competitive in pricing payoff specifications which require the observation of the price process over a finite number of pre-specified dates. Both methods -- quantization and the explicit scheme -- provide us with the marginal and transition probabilities associated with the points of the approximating grid.  Sampling from the discrete grid backward -- from the terminal point to the spot value of the process -- we design a simple but effective mechanism to draw Monte Carlo path and achieve a sizeable reduction of the variance associated with Monte Carlo estimators. Our conclusion is extensively supported by the numerical results presented in the final section. 

\bibliographystyle{unsrt}

\appendix

\section{The distortion function and companion parameters}\label{sec:appendix_2}
We suppose to have access to the quantizer $\Gamma_{k}$ of $\widetilde{X}_{t_{k}}$ and to the related Voronoi tessellations $\{C_{i}(\Gamma_{k})\}_{i=1,\dots,N}$. We derive an explicit expression for the distortion function $\widetilde{D}(\Gamma_{k+1})$ as follows:
\begin{equation}\label{eq:distortion_app}
  \begin{split}
    &\widetilde{D}(\Gamma_{k+1})=\mathbb{E}\left[d(\mathcal{E}_{k}(\widehat{\widetilde{X}}_{t_k}, \Delta t; Z_{t_{k+1}}),\Gamma_{k+1})^2\right]\\
    &=\sum_{i=1}^{N}\mathbb{E}\left[d(\mathcal{E}_{k}(\gamma_i^{k},\Delta t; Z_{t_{k+1}}),\Gamma_{k+1})^2\right]\mathbb{P}(\widetilde{X}_{t_k}\in C_{i}(\Gamma_{k}))\\
    &=\sum_{i=1}^{N}\sum_{j=1}^{N}(m_k(\gamma_{i}^k)-\gamma_{j}^{k+1})^2(\Phiz(\gamma_{k+1,j^{+}}(\gamma_{i}^{k}))-\Phiz(\gamma_{k+1,j^{-}}(\gamma_{i}^{k})))\mathbb{P}(\widetilde{X}_{t_k}\in C_{i}(\Gamma_{k}))\\
    &-2\sum_{i=1}^{N}\sum_{j=1}^{N}(m_k(\gamma_{i}^k)-\gamma_{j}^{k+1})v_k(\gamma_{i}^k)(\phiz(\gamma_{k+1,j^{+}}(\gamma_{i}^{k}))-\phiz(\gamma_{k+1,j^{-}}(\gamma_{i}^{k})))\mathbb{P}(\widetilde{X}_{t_k}\in C_{i}(\Gamma_{k}))\\
    &+\sum_{i=1}^{N}\sum_{j=1}^{N}v_k(\gamma_i^k)^2(\gamma_{k+1,j^{-}}(\gamma_{i}^{k})\phiz(\gamma_{k+1,j^{-}}(\gamma_{i}^{k}))-\gamma_{k+1,j^{+}}(\gamma_{i}^{k})\phiz(\gamma_{k+1,j^{+}}(\gamma_{i}^{k})))\mathbb{P}(\widetilde{X}_{t_k}\in C_{i}(\Gamma_{k}))\\
    &+\sum_{i=1}^{N}\sum_{j=1}^{N}v_k(\gamma_{i}^{k})^2(\Phiz(\gamma_{k+1,j^{+}}(\gamma_{i}^{k}))-\Phiz(\gamma_{k+1,j^{-}}(\gamma_{i}^{k})))\mathbb{P}(\widetilde{X}_{t_k}\in C_{i}(\Gamma_{k}))\,,
  \end{split}
\end{equation}
where $\Phiz$ and $\phiz$ indicate the cumulative distribution function and the probability density function of a standard Normal random variable, respectively.
To simplify notation, in Equation \eqref{eq:distortion_app}, we set for all $k\in \{0,\dots,n-1\}$ and for all $j\in \{1,\dots,N\}$
\begin{equation*}
  \begin{split}
    &\gamma_{k+1,j^{+}}(\gamma)\doteq \frac{\gamma_{j+1/2}^{k+1}-m_{k}(\gamma)}{v_{k}(\gamma)}\quad\text{and}\quad\gamma_{k+1,j^{-}}(\gamma)\doteq \frac{\gamma_{j-1/2}^{k+1}-m_{k}(\gamma)}{v_{k}(\gamma)}\,\quad\text{where}\\
    &\gamma_{j-1/2}^{k+1}\equiv \frac{\gamma_{j}^{k+1}+\gamma_{j-1}^{k+1}}{2}\,,\quad\gamma_{j+1/2}^{k+1}\equiv \frac{\gamma_{j}^{k+1}+\gamma_{j+1}^{k+1}}{2}\,,\quad\gamma_{1/2}^{k+1}\doteq -\infty\,,\quad\text{and}\quad\gamma_{N+1/2}^{k+1}\doteq +\infty\,.
  \end{split}
\end{equation*}
The so-called companion parameters $\{\mathbb{P}(\widetilde{X}_{t_{k}}\in C_{i}(\Gamma_{k}))\}_{i=1,\dots,N}$ and $\{\mathbb{P}(\widetilde{X}_{t_{k}}\in C_{j}(\Gamma_{k})\vert \widetilde{X}_{t_{k-1}}\in C_{i}(\Gamma_{k}) )\}_{j=1,\dots,N}$  are computed in a recursive way as follows:
\begin{equation*}
  \begin{split}
    &\mathbb{P}(\widetilde{X}_{t_{k}}\in C_{i}(\Gamma_{k}))=\sum_{j=1}^{N}(\Phiz(\gamma_{k,i^{+}}(\gamma_{j}^{k-1}))-\Phiz(\gamma_{k,i^{-}}(\gamma_{j}^{k-1})))\mathbb{P}(\widetilde{X}_{t_{k-1}}\in C_{j}(\Gamma_{k-1}))\,,\\
    &\mathbb{P}(\widetilde{X}_{t_{k}}\in C_{i}(\Gamma_{k})\vert \widetilde{X}_{t_{k-1}}\in C_{j}(\Gamma_{k-1}))=\Phiz(\gamma_{k,i^{+}}(\gamma_{j}^{k-1}))-\Phiz(\gamma_{k,i^{-}}(\gamma_{j}^{k-1})).
  \end{split}
\end{equation*}

\section{Lloyd I method within the \RMQA}\label{sec:appendix_1}
We present a brief review of the Lloyd I method within the Recursive Marginal Quantization framework. Let us fix $t_{k}\in \{t_{1},\dots,t_{n}\}$ and suppose we have access to the quantizer $\Gamma_{k}$ of $\widetilde{X}_{t_{k}}$ and to the associated Voronoi tessellations $\{C_{i}(\Gamma_{k})\}_{i=1,\dots,N}$. We want to quantize $\widetilde{X}_{t_{k+1}}=\mathcal{E}_{k}(\widehat{\widetilde{X}}_{t_{k}},\Delta t;Z_{t_{k+1}})$ by means of a quantizer $\Gamma_{k+1}\equiv \{\gamma_{1}^{k+1},\dots,\gamma_{N}^{k+1}\}$ of cardinality $N$. One starts with an initial guess $\Gamma_{k+1}^{0}$ and then one sets recursively a sequence $(\Gamma_{k+1}^{l})_{l\in \mathbb{N}}$ such that
\begin{equation}\label{eq:lloyd_1}
  \gamma_{j}^{k+1,l+1}=\mathbb{E}\left[\widetilde{X}_{t_{k+1}}\vert \widetilde{X}_{t_{k+1}}\in C_{j}(\Gamma_{k+1}^{l})\right]\,,
\end{equation}
where $l$ indicates the running iteration number.
One can easily check that previous equation implies that
\begin{equation*}
  q_{N}^{l+1}(\widetilde{X}_{t_{k+1}})=\mathbb{E}\left[\widetilde{X}_{t_{k+1}}\vert q_{N}^{l}(\widetilde{X}_{t_{k+1}})\right]\doteq \left(
  \mathbb{E}\left[\widetilde{X}_{t_{k+1}}\vert \widetilde{X}_{t_{k+1}}\in C_{i}(\Gamma_{k+1}^{l})\right]\right)_{1 \le i \le N},
\end{equation*}
where $q_{N}^{l}$ is the quantization associated with $\Gamma_{k+1}^{l}$. It has been proven (see \cite{pages2004optimal,carmona2012})) that $\{\|\widetilde{X}_{t_{k+1}}-q_{N}^{l}(\widetilde{X}_{t_{k+1}})\|_{2}, l\in \mathbb{N}^{+}\}$ is a non-increasing sequence and that $q_{N}^{l}(\widetilde{X}_{t_{k+1}})$ converges towards some random variable taking $N$ values as $l$ tends to infinity. From Equation~\eqref{eq:lloyd_1} and exploiting the idea of \RMQA we have
\begin{equation*}
  \begin{split}
    \gamma_{j}^{k+1,l+1}&=\mathbb{E}\left[\widetilde{X}_{t_{k+1}}\vert \widetilde{X}_{t_{k+1}}\in C_{j}(\Gamma_{k+1}^{l})\right]\\
    &=\frac{\mathbb{E}\left[\widetilde{X}_{t_{k+1}}\indicator_{\{\widetilde{X}\in C_{j}(\Gamma_{k+1}^{l})\}}\right]}{\mathbb{P}(\widetilde{X}_{t_{k+1}}\in C_{j}(\Gamma_{k+1}^{l}))}\\
    &=\frac{\mathbb{E}\left[\mathbb{E}\left[\widetilde{X}_{t_{k+1}}\indicator_{\{\widetilde{X}_{t_{k+1}}\in C_{j}(\Gamma_{k+1}^{l})\}}\vert \widetilde{X}_{t_{k}}\right]\right]}{\mathbb{E}\left[\mathbb{E}\left[\indicator_{\{\widetilde{X}_{t_{k+1}}\in C_{j}(\Gamma_{k+1}^{l})\}}\vert \widetilde{X}_{t_{k}}\right]\right]}\\
    &=\frac{\sum_{\substack{i=1}}^{N}\mathbb{E}\left[\mathcal{E}_{k}(\gamma_{i}^{k},\Delta t; Z_{t_{k+1}})\indicator_{\{\mathcal{E}_{k}(\gamma_{i}^{k},\Delta t; Z_{t_{k+1}})\in C_{j}(\Gamma_{k+1}^{l}) \}}\right]\mathbb{P}(\widetilde{X}_{t_{k}}\in C_{i}(\Gamma_{k}))}{\sum_{i=1}^{N}\mathbb{P}(\mathcal{E}_{k}(\gamma_{i}^{k},\Delta t; Z_{t_{k+1}})\in C_{j}(\Gamma_{k+1}^{l})\mathbb{P}(\widetilde{X}_{t_{k}}\in C_{i}(\Gamma_{k}))}\,.
  \end{split}
\end{equation*}
The last term in previous equation is equivalent to the stationary condition in Equation \eqref{eq:stationarity} for the quantization $q_{N}(\widetilde{X}_{t_{k+1}})$. Then, the stationary condition is equivalent to a fixed point relation for the quantizer.

\section{Robustness checks}\label{sec:appendix_3}
We test the convergence of Lloyd I method with and without Anderson acceleration on the quantization of a standard Normal random variable\footnote{At \url{www.quantize.maths-fi.com} a database providing quadratic optimal quantizers of the standard univariate Gaussian distribution from level $N=1$ to $N=1000$ is available.} initialised from a distorted quantizer. We indicate by $\Gamma_{\mathcal{N}(0,1)}^{*}$ the optimal quantizer of a standard Normal random variable and we distort it through the multiplication by a constant $c$, that is to say $c\times \Gamma_{\mathcal{N}(0,1)}^{*}$. Then, we monitor the convergence of both algorithms to $\Gamma_{\mathcal{N}(0,1)}^{*}$ starting from $c\times \Gamma_{\mathcal{N}(0,1)}^{*}$. 
The error at iteration $l$ is defined as $\|\Gamma_{\mathcal{N}(0,1)}^{*}-\Gamma_{\mathcal{N}(0,1)}^{l}\|_{2}$, with $\Gamma_{\mathcal{N}(0,1)}^{l}$ the quantizer found by the algorithms at the $l$-th iteration and $\|\cdot\|_{2}$ the Euclidean norm in $\mathbb{R}^{N}$. The stopping criteria is set to $\|\Gamma_{\mathcal{N}(0,1)}^{l+1}-\Gamma_{\mathcal{N}(0,1)}^{l}\|_{2}\le 10^{-7}$, the level of the quantizer to $N=10$, and the constant $c$ to 1.01. The results of our investigation are summarized in Figure~\ref{fig:acceleration_noacceleration}. We can graphically assess the rate of convergence\footnote{We recall that a sequence $(\Gamma^{l})_{l\in \mathbb{N}}$ converging to a $\Gamma^{*}\neq \Gamma^{l}$ for all $l$ is said to converge to $\Gamma^{*}$ with order $\alpha$ and asymptotic error constant $\lambda$ if there exist positive constants $\alpha$ and $\lambda$ such that
\begin{equation*}
  \lim_{l\rightarrow \infty}\frac{\|\Gamma^{l+1}-\Gamma^{*}\|_{2}}{\|\Gamma^{l}-\Gamma^{*}\|_{2}^{\alpha}}=\lambda\,.
\end{equation*}
} 
for both algorithms.  In case of Lloyd I method without acceleration the convergence is, as expected, linear. For Lloyd I method with acceleration the rate is not well defined, but Figure~\ref{fig:acceleration_noacceleration} shows the impressive improvement in the convergence towards the known optimal quantizer.
\begin{figure}[t]
  \centering
  \includegraphics[scale=1]{./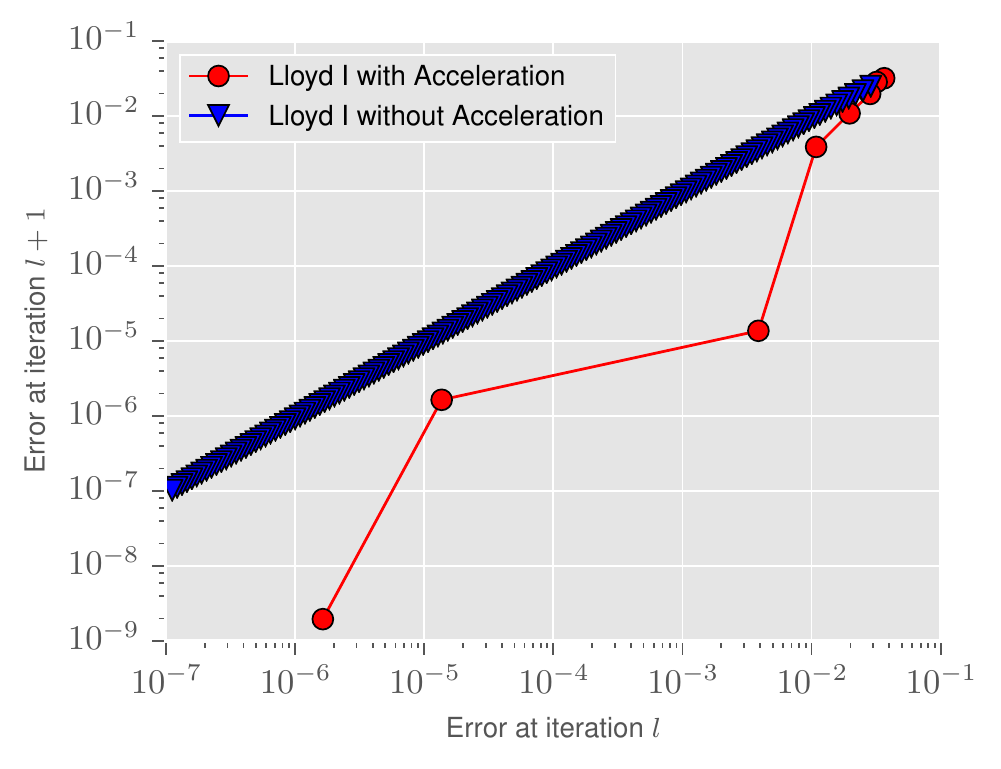}
  \caption{Quantization of a standard Normal random variable: Comparison of the convergence of Lloyd I method with and without Anderson acceleration.}
  \label{fig:acceleration_noacceleration}
\end{figure}
Figure~\ref{fig:iterations} supports the same conclusion in terms of the number of iterations necessary to reach the stopping criterion.
\begin{figure}[t]
  \centering
    \includegraphics[scale=1]{./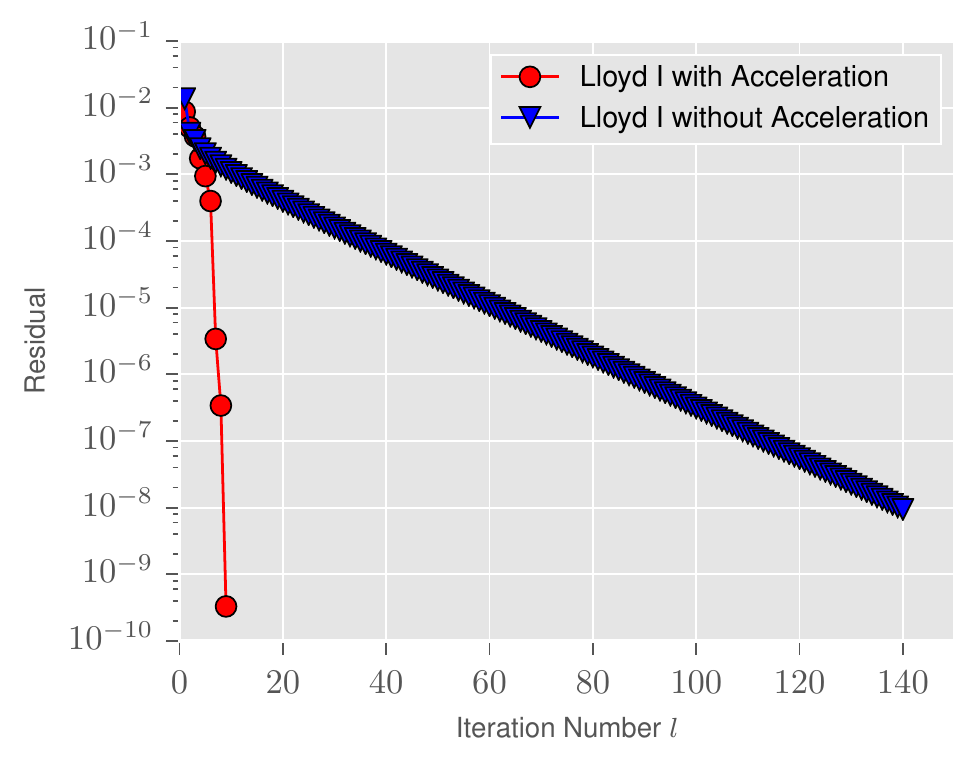} 
  \caption{Quantization of a standard Normal random variable: Comparison of the number of iterations necessary to reach the stopping criterion between Lloyd I method with and without Anderson acceleration.}
  \label{fig:iterations}
\end{figure}

Then, we investigate numerically the sensitivity of Lloyd I method with Anderson acceleration and Newton-Raphson algorithm to the initial guess as a function of the distortion $c$ applied to the optimal quantizer $\Gamma_{\mathcal{N}(0,1)}^{*}$. The results of our investigation are summarized in Figure \ref{fig:dependence_distortion}. The four panels correspond to different levels of distortion $c=\{1.10, 1.20, 1.25, 1.35\}$. As before, we set $N=10$ whereas on the $y$ axis we report the residual at iteration $l$, $\|\Gamma^{l+1}-\Gamma^{l}\|_{2}$. For low levels of distortion Newton-Raphson method converges to the optimal solution more quickly than Lloyd I method. This result confirms the theoretical behavior due to the quadratic rate of convergence of the Newton-Raphson algorithm. However, when the initial guess is quite far from the solution -- as it is for the cases of 25\% and 35\% distortion -- the algorithm may spend many cycles far away from the optimal grid.    
\begin{figure}[ht]
  \centering
  \begin{minipage}{0.495\linewidth}
    \includegraphics[scale=0.75]{./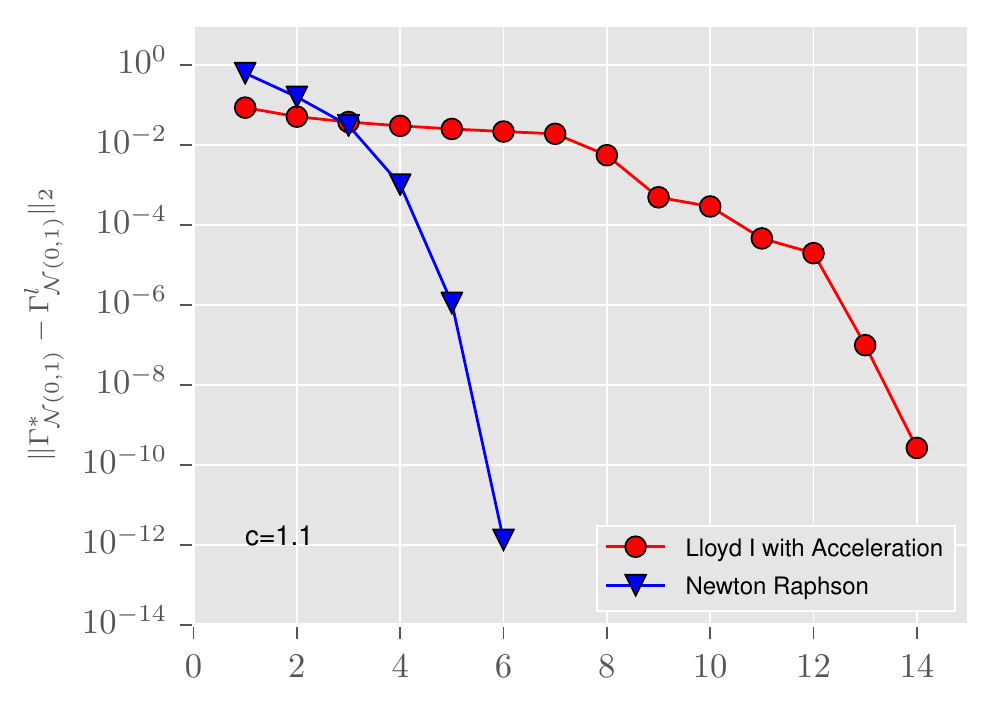} 
  \end{minipage}
  \begin{minipage}{0.495\linewidth}
    \includegraphics[scale=0.75]{./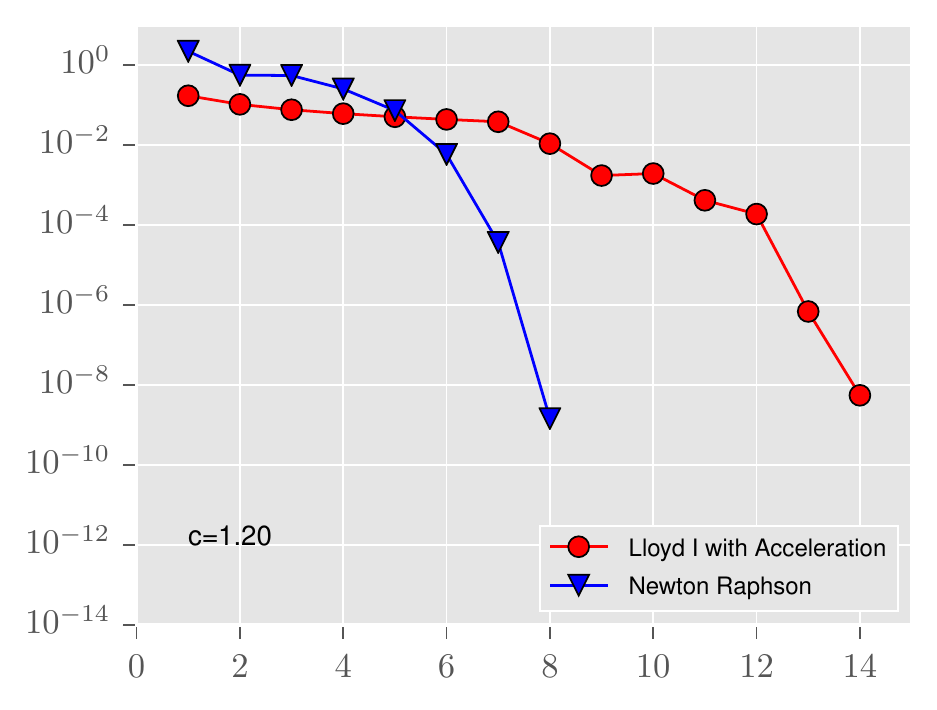} 
  \end{minipage}
  \begin{minipage}{0.495\linewidth}
    \includegraphics[scale=0.75]{./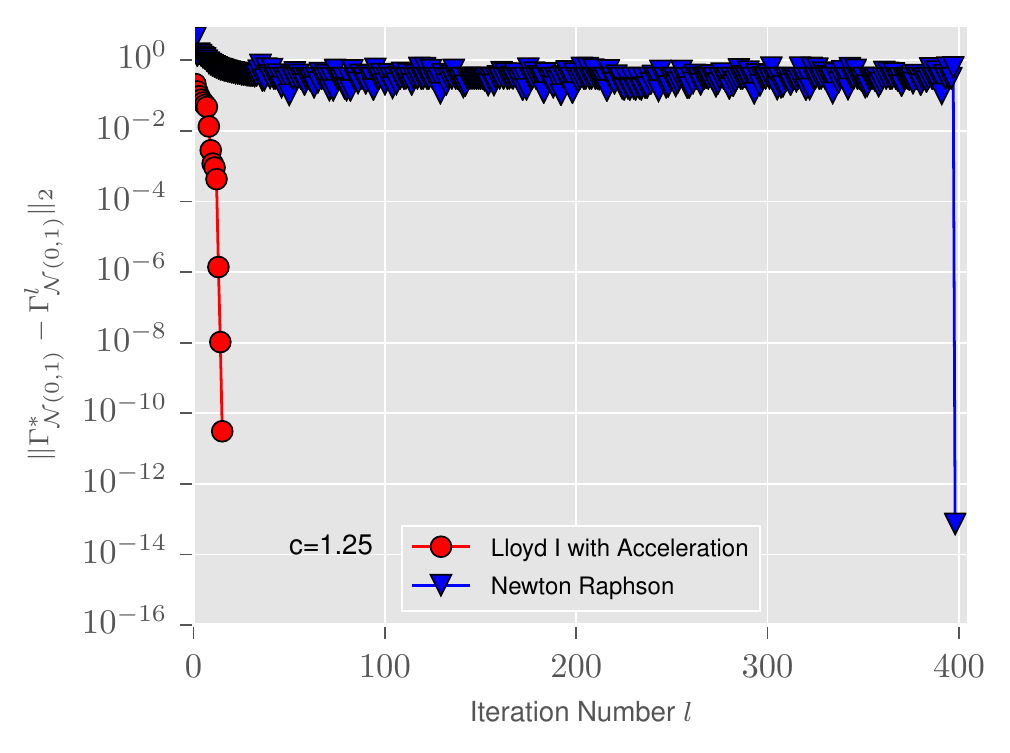} 
  \end{minipage}
  \begin{minipage}{0.495\linewidth}
    \includegraphics[scale=0.75]{./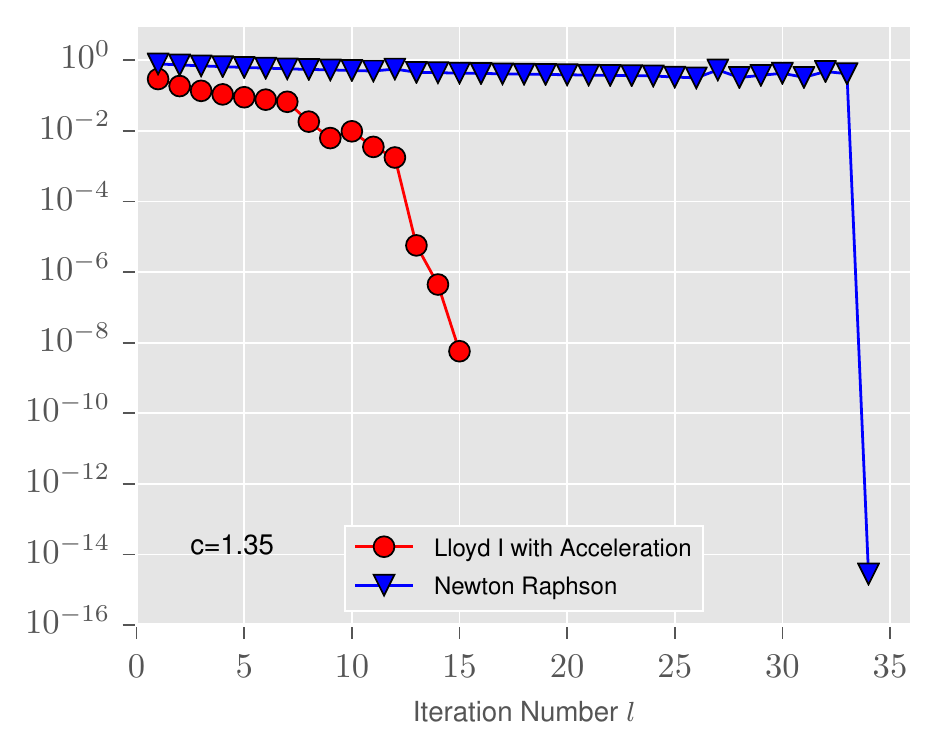} 
  \end{minipage}
  \caption{Quantization of a standard Normal random variable: Comparison between Lloyd I method with Anderson acceleration and Newton-Raphson algorithm.}
  \label{fig:dependence_distortion}
\end{figure}

Finally, we examine the convergence of Lloyd I and Newton-Raphson algorithms when considering the following Euler-Maruyama discrete scheme 
\begin{equation*}
  \begin{cases}
    \bar{X}_{t_{k+1}}=\bar{X}_{t_{k}}+r\bar{X}_{t_{k}}\Delta t+\sigma \bar{X}_{t_{k}}\sqrt{\Delta t}~Z_{t_k}\,,\\
    X_{0}=x_{0}\,,
  \end{cases}
\end{equation*}
with $r$, $\sigma$, and $x_0$ strictly positive real constants, and $\Delta t = t_{k+1} - t_k$ for all $k=1,\ldots,n-1$. To enlighten the greater sensitivity of the Newton-Raphson method to the grid initialisation in comparison with the Lloyd I with Anderson acceleration it is sufficient to stop at $n=2$ with $\Delta t = 0.01$. We set the level of the quantizers $\Gamma_1$ and $\Gamma_2$ equal to $N=30$ and $x_0=1$. By definition the random variable $\bar{X}_{t_{1}}\sim\mathcal{N}(m_{0}(x_{0}),v_{0}(x_{0}))$ where $m_{0}(x_{0}) =  x_{0}+rx_{0}\Delta t$ and $v_{0}(x_{0}) = \sigma x_{0}$. In order to compute the quantizer for $\bar{X}_{t_{1}}$ we initialise the algorithms at time $t_{1}$ to $m_{0}(x_{0})+v_{0}(x_{0})\Gamma_{\mathcal{N}(0,1)}^{*}$, with $\Gamma_{\mathcal{N}(0,1)}^{*}$ the optimal quantizer of a standard Normal random variable. Once we have obtain the optimal quantizer $\Gamma_{1}^{*}=\{\gamma_1^{*1},\cdots,\gamma_1^{*30}\}$ we set the initialisation of the quantizer $\Gamma_{2}^{Init}=\{\gamma_2^1,\cdots,\gamma_2^{30}\}$ at time $t_{2}$ using one of the following alternatives
\begin{itemize}
  \item[i.] the optimal quantizer at the previous step
    \begin{equation*}
      \Gamma_{2}^{Init}= \Gamma_{1}^{*}\,;
    \end{equation*}
  \item[ii.] the Euler operator
    \begin{equation*}
      \gamma_{2}^{i} = m_{1}(\gamma_{1}^{i})+v_{1}(\gamma_{1}^{i}) \Gamma_{\mathcal{N}(0,1)}^{*,i}\,,
    \end{equation*}
    for $i = 1,\dots,30$;
  \item[iii.] the mid point between Euler operator and the optimal quantizer at the previous step
    \begin{equation*}
      \gamma_{2}^{i} = 0.5 \gamma_{1}^{i}+0.5 (m_{1}(\gamma_{1}^{i})+v_{1}(\gamma_{1}^{i}) \Gamma_{\mathcal{N}(0,1)}^{*,i})\,,
    \end{equation*}
    for $i = 1,\dots,30$;
  \item[iv.] the expected value
    \begin{equation*}
      \gamma_{2}^{i} = m_{1}(\gamma_{1}^{i})\,,
    \end{equation*}
    for $i = 1,\dots,30$.
\end{itemize}
The left panel of Figure~\ref{fig:initialisation} shows the four different initial grids which correspond to above specifications. In the same panel, on the right side, we also plot the optimal quantizer $\Gamma_2^*$ to which both Lloyd I with Anderson acceleration and Newton-Raphson methods should converge. The right panel report the quantization error -- defined as $\sqrt{\widetilde{D}_{2}(\Gamma_{2}^{l})}$ -- as a function of the iteration number $l$. We stop the algorithm when the residual falls below $10^{-5}$. The numerical investigation shows that the Newton-Raphson method converges to the optimal grid faster than the Lloyd I method,  with the only exception represented by the case $\Gamma_2^{init}$ equal to the mid point. However, when initialized with the Euler operator or the mid point Newton-Raphson algorithm fails to converge due to the bad condition number of the Hessian matrix (corresponding lines are not reported on the Figure). This result is in line with the findings in~\cite{pages2003numerics} where the authors stress that the Newton-Raphson method may fail even for symmetric initial vectors since the anomalous behavior of some components of the Voronoi tessellation.

\medskip

In light of above explorations, we finally conclude that the approach based on a fixed-point algorithm such as the Lloyd I method with Anderson acceleration is much more robust than a Newton-Raphson approach -- which in the present application relies on the computation of the Hessian of the matrix.

\begin{figure}[t]
  \centering
  \begin{minipage}{0.495\linewidth}
    \includegraphics[scale=0.75]{./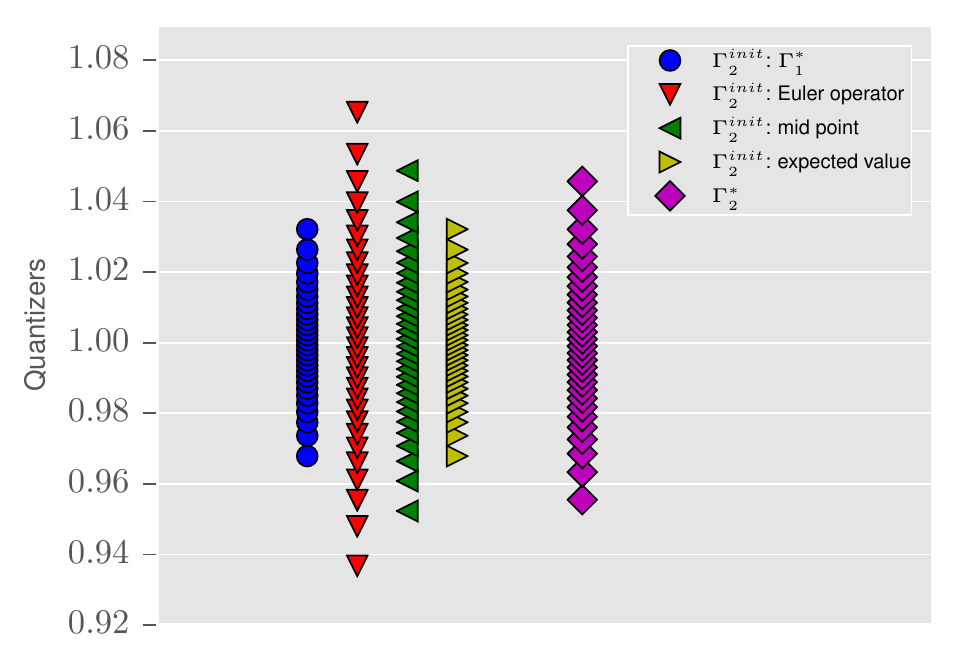} 
  \end{minipage}
  \begin{minipage}{0.495\linewidth}
    \includegraphics[scale=0.75]{./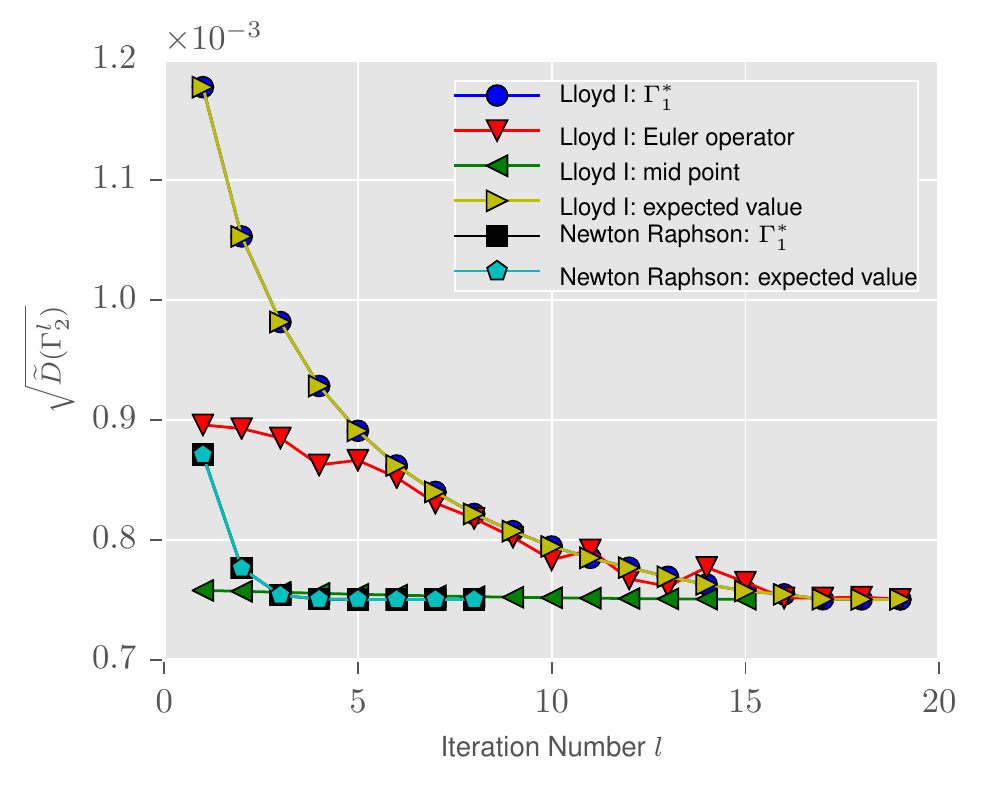} 
  \end{minipage}
  \caption{Quantization of the Euler-Maruyama scheme associated to a Geometric Brownian motion. Left panel: four different initial grids $\Gamma_2^{init}$ and optimal grid $\Gamma_2^*$ on the left and right sides, respectively. Right panel: quantization error as a function of the iteration number.}
  \label{fig:initialisation}
\end{figure}

\section{Construction of the Markov generator $\mathcal{L}_{\Gamma}$}\label{sec:appendix_4}
We define $\Delta \gamma \doteq \gamma_{i+1}-\gamma_{i}$, $1\le i \le N-1$. The finite difference approximation of the first and second partial derivative in Equation \eqref{eq:markov_generator} is defined as
\begin{equation*}
  \begin{split}
    &\frac{\partial u}{\partial \gamma}(\gamma,t)\approx \frac{u(\gamma_{i}+\Delta \gamma,t)-u(\gamma_{i}-\Delta \gamma,t)}{2\Delta \gamma}\,,\\
    &\frac{\partial^{2} u}{\partial \gamma^{2}}(\gamma,t)\approx \frac{u(\gamma_{i}+\Delta \gamma,t)-2 u(\gamma_{i},t)+ u(\gamma_{i}-\Delta \gamma, t)}{(\Delta \gamma)^2}\,,
  \end{split}
\end{equation*}
for all $t \in [0,T]$.
The Markov generator $\mathcal{L}_{\Gamma}$ is the $N\times N$ matrix defined as
\begin{equation*}
\mathcal{L}_{\Gamma} \doteq
 \begin{pmatrix}
  d_1 & u_1  & 0   & 0 &\cdots & 0 &0\\
  l_2 & d_2  & u_2 & 0 &\cdots & 0 &0\\
   0  &\ddots& \ddots& \ddots  & \cdots&  0&0\\
   0  & 0    &  l_i  & d_i & u_i&0&  0\\
   0  & 0    & 0&\ddots& \ddots& \ddots  &  0 \\
   0  & 0    & 0& 0    &l_{N-1}& d_{N-1} &  u_{N-1} \\
   0  & 0    & 0& 0 & 0& l_{N}  &  d_{N}
 \end{pmatrix}\,,
\end{equation*}
where the coefficients $l_{i}$, $d_{i}$ and $u_{i}$ are given by
\begin{equation*}\label{eq:coefficients}
  \begin{split}
    &l_{i}=-\frac{b(\gamma_{i})}{2\Delta \gamma}+\frac{1}{2}\frac{\sigma(\gamma_{i})^2}{(\Delta \gamma)^2}\,,\\
    &d_{i}=-\frac{\sigma(\gamma_{i})^2}{(\Delta \gamma)^2}\,,\\
    &u_{i}=+\frac{b(\gamma_{i})}{2\Delta \gamma}+\frac{1}{2}\frac{\sigma(\gamma_{i})^2}{(\Delta \gamma)^2}\,,
  \end{split}
\end{equation*}
for all $1\le i \le N$. The coefficients of the first and last row are chosen so that the Markov chain is reflected at the boundaries of the state domain. The choice of the boundary conditions should have a negligible effect provided that the range of the state domain is sufficiently large.
\end{document}